\newtheorem{theorem}{Theorem}
\newtheorem{corollary}{Corollary}
\newtheorem{definition}{Definition}
\newtheorem{example}{Example}
\newtheorem{lemma}{Lemma}
\newtheorem{remark}{Remark}
\newenvironment{proof}[1][Proof]{\noindent\textbf{#1.} }{\ \rule{0.5em}{0.5em}}
\begin{document}

%\title{Eigenvectors in terms of Adjugate Matrix and Tensors}
\title{Eigenvectors in terms of reduced complements \\of minor determinants}
\author{M.I. Krivoruchenko\footnote{E-mail: mikhail.krivoruchenko@itep.ru} \\\\
\textit{National Research Centre "Kurchatov Institute"} \\ \textit{Pl. Akademika Kurchatova 1, 123182 Moscow, Russia}}
%\affiliation{National Research Centre "Kurchatov Institute", Moscow, Russia}

%\volumeyear{year}
%\volumenumber{number}
%\issuenumber{number}
%\eid{identifier}
%\date[Date text]{date}
%\received[Received text]{date}

%\revised[Revised text]{date}
%\accepted[Accepted text]{date}
%\published[Published text]{date}
%\startpage{1}
%\endpage{}

\maketitle

\begin{abstract}
Eigenvectors associated with non-degenerate eigenvalues are shown to correspond to columns of the adjugate of the characteristic matrix. Degenerate eigenvalues are associated with eigenvectors that correspond to reduced complement tensors of minor determinants of the characteristic matrix. These observations are corroborated by a description of the non-degenerate two-level system and the Dirac equation, which exhibits twofold spin degeneracy of energy eigenvalues. Trace identities for the reduced order-one complement tensor and the diagonal sum of minor determinants are also presented.
\\
\\
\textbf{Keywords:} eigenvalue problem, eigenvectors, complements of minors \\
\textbf{PACS number:} 15A18
\end{abstract}

%\vspace{2mm}
%%%%%%%%%%%%%%%%%%%%%%%%%%%%%%%%%%%%%%%%%%%%%%%%%%%%%%%%%%%
%%%%%%%%%%%%%%%%%%%%%%%%%%%%%%%%%%%%%%%%%%%%%%%%%%%%%%%%%%%
%%%%%%%%%%%%%%%%%%%%%%%%%%%%%%%%%%%%%%%%%%%%%%%%%%%%%%%%%%%
\section{Introduction}
%%%%%%%%%%%%%%%%%%%%%%%%%%%%%%%%%%%%%%%%%%%%%%%%%%%%%%%%%%%
%%%%%%%%%%%%%%%%%%%%%%%%%%%%%%%%%%%%%%%%%%%%%%%%%%%%%%%%%%%
%%%%%%%%%%%%%%%%%%%%%%%%%%%%%%%%%%%%%%%%%%%%%%%%%%%%%%%%%%%
\renewcommand{\theequation}{1.\arabic{equation}}
\setcounter{equation}{0}

Eigenvalue equation 
\begin{equation} \label{eq}
\mathbf{H}\psi=\lambda\psi 
\end{equation}
is applicable to all domains of the natural sciences that use linear algebra, including classical mechanics, quantum physics, quantum chemistry, mathematical statistics, etc.
In this equation, $\mathbf{H}$ is an $n \times n$ matrix with complex elements, and $\psi$ 
is a vector of the complex space $\mathbb{C}^n$. 
The eigenvalues $\lambda$ are identified as roots of the characteristic equation 
\begin{equation}
\det\left( \lambda \mathbf{I} -\mathbf{H}\right) =0,  \label{det = 0}
\end{equation}
where $\mathbf{I}$ is the $n \times n$ identity matrix.

A necessary and sufficient condition for the existence of $n$ nonzero eigenvectors 
$\psi \in \mathbb{C}^n$ is the fulfillment of equality (\ref{det = 0}). 
The characteristic matrix 
\begin{equation}
\mathbf{C}(\lambda) = \lambda \mathbf{I} -\mathbf{H}
\end{equation}
shows linear dependence in its rows, 
when the eigenvalues are determined and $\mathbf{C}(\lambda)$ has a rank of less than $n$:    
$\mathrm{rank}( \mathbf{C}(\lambda) ) <n$. 
%\end{equation*}

The characteristic equation is a polynomial equation of degree $n$. 
Faddeev and Sominskii \cite{Faddeyev:1949} and J. S. Frame \cite{Frame:1949} 
construct a recursive algorithm for calculating characteristic polynomial coefficients in terms of the traces of the powers of the matrix.
The techniques are inspired by Le Verrier's paper \cite{LeVerrier:1840}. 
A closed analytical form for the recursion solutions \cite{Reed:1978,Kondratyuk:1992,Brown:1994,Krivoruchenko:2016} can be obtained by highlighting specific aspects of matrix algebra based on Cayley-Hamilton's theorem \cite{Gantmacher:1976}.
Trace identities are also known to exist for adjugate matrices \cite{Reed:1978,Kondratyuk:1992,Krivoruchenko:2016} and Pfaffians \cite{Krivoruchenko:2016}. Other noteworthy %notable 
developments include the analytical relationship between modules of eigenvector components and
diagonal minor determinants and eigenvalues \cite{Thompson:1966} (for a recent review, see \cite{Denton:2021}), 
and particular results for mass matrices 
in the theory of Majorana neutrinos
\cite{Krivoruchenko:2024}.

Advanced numerical approaches for calculating eigenvalues and 
eigenvectors include the Arnoldi iteration \cite{Arnoldi:1951}, the QR algorithm \cite{Francis:1961,Francis:1962,Kublanovskaya:1962} and others (for a review, see \cite{Trefethen:1997}). 

In this paper, we show that  in a scenario of $\mathrm{rank}(\mathbf{C}(\lambda)) = n - 1$, the adjugate of $\mathbf{C}(\lambda)$ determines the eigenvectors of the eigenvalue equation (Sect. 3, Theorem 1). For an $s$-fold degeneracy of the eigenvalues and $\mathrm{rank}(\mathbf{C}(\lambda)) < n - 1$,
the reduced complement tensor to an order-$s$ minor determinant of $\mathbf{C}(\lambda)$ specifies $s$ linear independent eigenvectors (Sect. 4, Theorem 3). We establish components of the reduced complement tensor that match the eigenvectors. Trace identities for the reduced complement tensors (Sect. 4, Theorem 4) and the diagonal sum of minor determinants (Appendix B, Theorem 6) are also derived. 
Each of the aforementioned new theorems is backed by proofs of the most relevant 
previously known facts.
Two examples of using the proposed methods to describe the non-degenerate two-level system (Sect. 3, Example 1) and the Dirac equation with twofold spin degeneracy of energy eigenvalues (Sect. 4, Example 3) are considered.

%The construction of the eigenvectors of Eq. (\ref{eq}) is equivalent to the %construction of a matrix $\mathbf{U}$ diagonalizing the matrix $\mathbf{H}$ by a %similarity transformation. The columns of the matrix $\mathbf{U}$ are the %eigenvectors. 

%\textcolor{blue}{Describe current popular tools, analytical and numerical ones, to %solve the eigenvalues and eigenvectors equation.
%Why do we need new analytical methods.}

%\textcolor{blue}{In this paper, we show that adjugate matrix determines eigenvectors %of  Eq. (\ref{eq}) in the case of $\mathrm{rank}( \mathbf{A})  =n-1$. 
%For an $s$-fold degeneracy of eigenvalues of the characteristic matrix of 
%$\mathrm{rank}( \mathbf{A} ) <n-1$, we find an extension 
%of Eq. (\ref{basic eq adj}) and show that the complement tensor of an order-$s$ minor %determinant of matrix
%$ \mathbf{A} $ specifies $s$ linear independent eigenvectors %of characteristic %matrix for each set of the degenerate eigenvalues. We specify components
%of the complement tensor corresponding to the eigenvectors.
%Finally, two examples of applying the developed formalism for description 
%of non-degenerate two-level system and Dirac equation that has twofold spin %degeneracy of energy eigenvalues are given.}

%%%%%%%%%%%%%%%%%%%%%%%%%%%%%%%%%%%%%%%%%%%%%%%%%%%%%%%%%%%
%%%%%%%%%%%%%%%%%%%%%%%%%%%%%%%%%%%%%%%%%%%%%%%%%%%%%%%%%%%
%%%%%%%%%%%%%%%%%%%%%%%%%%%%%%%%%%%%%%%%%%%%%%%%%%%%%%%%%%%
\section{Basic notations and definitions}
%%%%%%%%%%%%%%%%%%%%%%%%%%%%%%%%%%%%%%%%%%%%%%%%%%%%%%%%%%%
%%%%%%%%%%%%%%%%%%%%%%%%%%%%%%%%%%%%%%%%%%%%%%%%%%%%%%%%%%%
%%%%%%%%%%%%%%%%%%%%%%%%%%%%%%%%%%%%%%%%%%%%%%%%%%%%%%%%%%%
\renewcommand{\theequation}{2.\arabic{equation}}
\setcounter{equation}{0}
 
Contravariant and covariant components of tensors in the
coordinate system, which diagonalizes matrix $\mathbf{H}$, are represented
by Gothic letters $\mathfrak{a}$, $\mathfrak{b}$, \ldots\ from the beginning
of the alphabet. Eigenvalues and eigenvectors are
numbered by covariant indices: $\lambda_{\mathfrak{a}}$ and $\psi_{\mathfrak{a}}$ for 
$\mathfrak{a} \in \Sigma_{n} = (1,\ldots,n).$ Components of tensors in the initial
coordinate system are represented by Latin letters from the middle of the alphabet: $i$, $%
j$, \ldots. The eigenvectors of $\mathbf{H}$ are the eigenvectors of $\mathbf{C}(\lambda)$, and 
vice versa.

We discuss the main scenario for a Hermitian $n \times n$ matrix $\mathbf{H}$. The characteristic matrix $\mathbf{C}(\lambda)$ is also Hermitian. The unitary matrices transforming the coordinate systems are denoted by $\mathbf{U}$ and $\mathbf{V}$.
The matrices in statements that are independent of the type of matrix are denoted by $\mathbf{A}$, $\mathbf{B}$, and $\mathbf{D}$. 

\begin{definition}
The adjugate of an $n \times n$ marix $\mathbf{A}$ is defined by
\begin{equation}
\left( \mathrm{adj}\, \mathbf{A}\right)_{i}^{j}=
\frac{1}{(n-1)!}\delta_{ii_{2}\ldots i_{n}}^{jj_{2}\ldots j_{n}}
A_{j_{2}}^{i_{2}}\ldots A_{j_{n}}^{i_{n}}.  \label{adj def}
\end{equation}
\end{definition}
The summation is performed on the identical indices in accordance with Einstein's summation rule. Independent summation over identical indices is the same as summing over ordered sets of identical indices. the basic properties of the generalized Kronecker symbol entering the right side of the equation are described in Appendix A.  

In accordance with Eq. (\ref{eq}), the eigenvectors $\psi$ and rows of the
characteristic matrix are orthogonal,
when the complex space is endowed with a Euclidean metric.
Given that the characteristic matrix $\mathbf{C}(\lambda)$ is
degenerate, one can find a vector $\psi\neq 0$ orthogonal to all its rows.

To construct $\psi$, we first consider the identity 
\begin{equation}
A_{k}^{j} \left( \mathrm{adj}\, \mathbf{A}\right)  
_{i}^{k}=\delta_{i}^{j} \det  \mathbf{A}
\label{basic eq adj}
\end{equation}
for the adjugate matrix $\mathbf{A}$.
When the determinant in Eq.~(\ref{basic eq adj}) vanishes, we obtain a family
of formal solutions to Eq. (\ref{eq}):
\begin{equation}
\psi_{i}^{j}(s,\lambda) = \left( \mathrm{adj} \, \mathbf{A} \right)
_{i}^{j}  \label{eigenv}
\end{equation}
with $\mathbf{A} = \mathbf{C}(\lambda)$ and $i \in \Sigma_{n}$. 
The meaning of argument $s$ is explained below.
An essential parameter is also the algebraic multiplicity $\mu_{\mathbf{H}}(\lambda)$ of the eigenvalue $\lambda$. 
In the next section, we show that
for a rank $n - 1$ characteristic matrix, all vectors (\ref{eigenv}) with lables $i \in \Sigma_{n}$ 
are parallel to one another and define a unique eigenvector.

We adhere to 
\begin{definition}
    The minor determinant of order $s \leq n$ of an $n \times n$ matrix $\mathbf{A}$ is an antisymmetric tensor
    \begin{eqnarray}
    \left(\det \mathbf{A} \right)^{j_{1} \ldots j_{s}}_{i_{1} \ldots i_{s}} &=& A_{[i_{1}}^{j_{1}}\ldots A_{i_{s}]}^{j_{s}} \nonumber \\
    &=& \frac{1}{s!}
        \delta_{k_{1} \ldots k_{s}}^{j_{1} \ldots j_{s}} 
        \delta_{i_{1} \ldots i_{s}}^{l_{1} \ldots l_{s}} 
        A_{l_{1}}^{k_{l}}\ldots A_{l_{s}}^{k_{s}} \label{def 3}
    \end{eqnarray}
    for $(j_{1},\ldots,j_{s}) \subseteq \Sigma_{n}$ and $(i_{1},\ldots,i_{s}) \subseteq \Sigma_{n}$. The antisymmetrization in the covariant indices is assumed in the first line and made explicit in the second line. 
\end{definition}

\begin{definition}
The order of a tensor is the number of its contravariant and covariant indices. The order of an antisymmetric tensor with the same number of contravariant and covariant indices is the number of its contravariant indices.
\end{definition}

For $s=n$, the minor determinant equals 
\begin{eqnarray}
    \left(\det \mathbf{A} \right)^{j_{1} \ldots j_{n}}_{i_{1} \ldots i_{n}} &=& A_{[i_{1}}^{j_{1}}\ldots A_{i_{n}]}^{j_{n}} \nonumber \\ &=& \delta_{i_{1} \ldots i_{n}}^{j_{1} \ldots j_{n}} \det \mathbf{A}.
    \label{det}
\end{eqnarray}

\begin{definition} 
Let $\mathbf{A}$ be an $n\times n$ matrix. The antisymmetric tensor 
\begin{equation} \label{complement def}
(\underset{s}{\det } \,\mathbf{A})_{i_{1}\ldots i_{r}}^{j_{1}\ldots
j_{r}} 
= \frac{1}{(n - s)!}
\delta\underset{r + n - s}{\underbrace{_{_{{i_{1}\ldots i_{r}i_{s+1}\ldots i_{n}}}}}
    ^{j_{1} \ldots j_{r}j_{s+1} \ldots j_{n}}}
\underset{n-s}{\underbrace{A_{j_{s+1}}^{i_{s+1}}\ldots A_{j_{n}}^{i_{n}}}}
\end{equation}
with $r \leq s$ and index sets $J=(j_{1},\ldots ,j_{r})\subseteq \Sigma _{n}$ and $%
I=(i_{1},\ldots ,i_{r})\subseteq \Sigma _{n}$ is called a reduced complement
of order $r,s$ to the minor determinants of order $s\leq n$. The
index sets $(j_{s+1},\ldots ,j_{n})=\Sigma _{n}\setminus J$ and $%
(i_{s+1},\ldots ,i_{n})=\Sigma _{n}\setminus I$ determine columns and rows of the matrix $\mathbf{A}$ used to construct the minor determinants. The reduced complement of order $s,s$ 
is called a complement of order $s$.
\end{definition}

In particular, $(\underset{0}\det \, \mathbf{A})_{{I}}^{{J}}=\det \mathbf{A}$, where 
${J}={I}=\varnothing$, and  
$(\underset{1}\det \mathbf{A})_{i}^{j}=(\mathrm{adj}\,\mathbf{A})_{i}^{j}$. 
The antisymmetrized products of matrices $\mathbf{A}$ in Eq. (\ref{adj def}) are the minor determinants of order $n - 1$, while 
the right side is the order-one complement.
The first argument of $\psi_{i}^{j}(s,\lambda)$ defined in Eq. (\ref{eigenv}) is the size of the matrix 
$\mathbf{C}(\lambda)$
minus the number of matrices entering the product in Eq.~(\ref{adj def}). 
Given subsequent generalizations in Sect. 4, the parameter $s$ is not always equal to one.

\begin{remark} The reduced complement of order $r,s$ is the complement of order $s$
subjected to $s-r$ contractions of the identical indices:
\begin{equation} \label{complement def trace}
(\underset{s}{\det } \, \mathbf{A})_{i_{1}\ldots i_{r}}^{j_{1}\ldots
j_{r}} 
=
\frac{1}{(s-r)!}(\underset{s}{\det } \,\mathbf{A})
%_{i_{1}\ldots i_{r}j_{r+1}\ldots j_{s}}
\underset{s}{\underbrace{_{i_{1}\ldots i_{r}j_{r+1}\ldots j_{s}}
^{j_{1} \ldots j_{r}j_{r+1}\ldots j_{s}}}}.
\end{equation}
\end{remark}
This equation can be verified using Eqs. (\ref{complement def}) and (\ref{APP 3}). 

Each minor determinant of order $n-s$ makes a contribution to Eqs. (\ref{complement def}) and (\ref{complement def trace}) with a weight of $\pm 1$ or zero. 
In many instances, this observation enables to recover the combinatorial coefficients.

We use the terms "minor" and "minor determinant" and "complement" and "complement tensor" as synonyms, respectively.

The $\mathbf{H}$ matrix 
can be diagonalized using a unitary matrix $\mathbf{U}$: 
\begin{equation}
\mathbf{H}=\mathbf{U} \mathbf{H}^{\prime} \mathbf{U}^{-1}.
\nonumber % \label{tr1}
\end{equation}
Here, $\mathbf{H}^{\prime} =\mathrm{diag}(\lambda_{1},\ldots,\lambda_{n})$ represents a matrix with eigenvalues of $\mathbf{H}$ on the main diagonal.
Upon the transformation, the characteristic matrix becomes
\begin{equation}
\mathbf{C}^{\prime}(\lambda) =\mathrm{diag}(\lambda-\lambda_{1},\ldots,\lambda-\lambda_{n}).    
\nonumber %
\end{equation}
Vectors of Hilbert space $\mathbb{C}^n$ are transformed accordingly: %
$\psi = \mathbf{U} \psi^{\prime}$,  %\label{tr2}
so that Eq.~(\ref{eq}) can be written as follows:
\begin{equation}
\mathbf{C}^{\prime} (\lambda) \psi^{\prime} = 0.  \label{tr3}
\end{equation}

In matrix notation, vectors and tensors with two indices are marked by a stroke in a coordinate system diagonalizing $\mathbf{H}$. 
When tracking differences at the index level, tensor notation omits the stroke. 
The transformation matrix is denoted accordingly by $U^{j}_{\mathfrak{a}}$, and its inverse is  
$(\mathbf{U}^{-1})^{\mathfrak{a}}_{i} = 
(\mathbf{U}^{\dagger})^{\mathfrak{a}}_{i} = 
(\mathbf{U}^{\ast})^{i}_{\mathfrak{a}}
\equiv U^{\mathfrak{a}}_{i}$, 
and $U^{j}_{\mathfrak{a}}  U^{\mathfrak{a}}_{i} = \delta^{j}_{i}$ 
and $U^{\mathfrak{a}}_{j} U^{j}_{\mathfrak{b}}   = \delta^{\mathfrak{a}}_{\mathfrak{b}}$ .

%%%%%%%%%%%%%%%%%%%%%%%%%%%%%%%%%%%%%%%%%%%%%%%%%%%%%%%%%%%
%%%%%%%%%%%%%%%%%%%%%%%%%%%%%%%%%%%%%%%%%%%%%%%%%%%%%%%%%%%
%%%%%%%%%%%%%%%%%%%%%%%%%%%%%%%%%%%%%%%%%%%%%%%%%%%%%%%%%%%
%\section{Eigenvectors of matrices with $\mathrm{rank}(\mathbf{C} ) = n - 1$ }
\section{Eigenvectors for non-degenerate eigenvalues}
%%%%%%%%%%%%%%%%%%%%%%%%%%%%%%%%%%%%%%%%%%%%%%%%%%%%%%%%%%%
%%%%%%%%%%%%%%%%%%%%%%%%%%%%%%%%%%%%%%%%%%%%%%%%%%%%%%%%%%%
%%%%%%%%%%%%%%%%%%%%%%%%%%%%%%%%%%%%%%%%%%%%%%%%%%%%%%%%%%%
\renewcommand{\theequation}{3.\arabic{equation}}
\setcounter{equation}{0}

Eigenvalues of matrix $\mathbf{H}$ are assumed to be enumerated
and we consider a fixed index $\mathfrak{d}$. 
The rank of the haracteristic matrix: $\mathrm{rank}(\mathbf{C}(\lambda_{\mathfrak{d}})) = n - 1$, corresponds to the algebraic multiplicity $\mu_{\mathbf{H}}(\lambda_{\mathfrak{d}}) = s = 1$ of non-degenerate eigenvalues, which is the case discussed in this section.
Applying to Eq. (\ref{adj def}) a transformation to the diagonal coordinate
system, where the characteristic matrix equals $C_{\mathfrak{b}}^{\mathfrak{a}}(\lambda) =\delta_{\mathfrak{b}}^{\mathfrak{a}}\left( \lambda-\lambda_{\mathfrak{b}}\right) $, and using Eq.~(\ref{eigenv}), we obtain 
\begin{align}
\psi_{\mathfrak{b}}^{\mathfrak{a}}(s,\lambda_{\mathfrak{d}}) & = {U}_{j}^{\mathfrak{a}}\psi_{i}^{j}(s,\lambda_{\mathfrak{d}})U_{\mathfrak{b}}^{i} \nonumber \\
& =\frac{1}{(n-1)!}\delta^{\mathfrak{aa}_{2} \ldots\mathfrak{a}_{n}}
                         _{\mathfrak{bb}_{2} \ldots\mathfrak{b}_{n}} 
C_{\mathfrak{a}_{2}}^{\mathfrak{b}_{2}}(\lambda_{\mathfrak{d}}) \ldots C_{\mathfrak{a}_{n}}^{\mathfrak{b}_{n}}(\lambda_{\mathfrak{d}})
\nonumber \\
& =\delta_{\mathfrak{b}}^{\mathfrak{a}}\prod\limits_{\mathfrak{c}\neq%
\mathfrak{b}}\left( \lambda_{\mathfrak{d}} -\lambda_{\mathfrak{c}}\right) . \nonumber %\label{3.1}
\end{align}
To prevent the right side of the
equation from turning to zero, $\lambda_{\mathfrak{d}}$ should not
appear in the product sign in place of $\lambda_{\mathfrak{c}}$. The only non-vanishing column of the adjugate
matrix ${\psi}_{\mathfrak{b}}^{\mathfrak{a}}$ is obviously the one with $\mathfrak{b=d}$. At the same time, ${\psi}_{%
\mathfrak{b}}^{\mathfrak{a}}$ has a
non-vanishing component at $\mathfrak{a=b=d}$, in accord with 
the statement that for degenerate matrix $\mathbf{A}$
$\mathrm{rank}( \mathrm{adj}\,\mathbf{A} ) \leq 1$ \cite{Prasolov:1994}. The same conclusion 
follows form Eq.~(\ref{tr3}).

Using the reciprocal transformation, the adjugate matrix in the
initial coordinate system can be found (cf. (\ref{eigenv})): 
\begin{equation}
\psi_{i}^{j}(s,\lambda_{\mathfrak{d}}) =U_{\mathfrak{d}}^{j} 
U_{i}^{\mathfrak{d}}\prod\limits_{\mathfrak{c}\neq\mathfrak{d}%
}\left( \lambda_{\mathfrak{d}}-\lambda_{\mathfrak{c}}\right) .
\label{eigen vec equiv}
\end{equation}

The construction of matrix $\mathbf{U}$ is not mandatory; one can
immediately apply the formula (\ref{adj def}) in the initial coordinate
system. The equivalent representation (\ref{eigen vec equiv}) states that
for any eigenvalue, $\lambda=\lambda_{\mathfrak{d}}$, the factor $ 
U_{i}^{\mathfrak{d}}$ alters the norm only, hence all
eigenvectors $\psi_{i}^{j}$ with $i \in \Sigma_{n}$
are parallel and give the same solution.

The eigenvectors obtained in the formalism constitute an orthogonal set in
terms of the Hermitian product. Let $\lambda_{\mathfrak{b}}$ and $\lambda_{\mathfrak{d}}$ 
be eigenvalues corresponding to the eigenvectors $\psi_{i }^{j}(\lambda_{\mathfrak{b}})$ and $\psi_{k }^{j}(\lambda_{\mathfrak{d}})$, respectively.  The Hermitian product equals
\begin{align}
\sum_{j}\psi_{i}^{j\ast}(s,\lambda_{\mathfrak{b}}) \psi_{k}^{j}(s,\lambda_{\mathfrak{d}}) 
& = \sum_{j} {U}_{j}^{\mathfrak{b}}U_{\mathfrak{d}}^{j} \, {U}_{\mathfrak{b}}^{i} {U}_{k}^{%
\mathfrak{d}}\prod\limits_{\mathfrak{c}\neq\mathfrak{b}}\left( \lambda_{%
\mathfrak{b}}-\lambda_{\mathfrak{c}}\right) \prod \limits_{\mathfrak{e}\neq%
\mathfrak{d}}\left( \lambda_{\mathfrak{d}}-\lambda_{\mathfrak{e}}\right) \nonumber \\
& =\delta_{\mathfrak{d}}^{\mathfrak{b}}\,{U}_{\mathfrak{b}}^{i} {U}_{k}^{\mathfrak{d}} \prod\limits_{\mathfrak{c}\neq\mathfrak{b}}\left( \lambda_{\mathfrak{b}%
}-\lambda_{\mathfrak{c}}\right) ^{2}. \nonumber % \label{3.3}
\end{align}
The norm for $\mathfrak{b} = \mathfrak{d}$ is positively defined, so the set of
eigenvectors can be made orthonormal.

%\newpage
%%%%%%%%%%%%%%%%%%%%%%%%%%%%%%%%%
We have thus proven
\begin{theorem}
    Let $\mathbf{H}$ be an $n \times n$ Hermitian matrix and let $\lambda_{\mathfrak{d}}$ be an eigenvalue of $\mathbf{H}$ of algebraic multiplicity $\mu_{\mathbf{H}}(\lambda_{\mathfrak{d}}) = 1$. 
    In the diagonal coordinate system, the adjugate matrix $\psi_{\mathfrak{b}}^{\mathfrak{a}}(s = 1,\lambda_{\mathfrak{d}})$ of the characteristic matrix $\mathbf{C}^{\prime}(\lambda_{\mathfrak{d}})$
    has a unique non-vanishing component 
    at $\mathfrak{a} = \mathfrak{b} = \mathfrak{d}$. 
    The adjugate matrix for fixed values of $\mathfrak{b}=\mathfrak{d}$ determines an eigenvector of  $\mathbf{H}^{\prime}$ whose contravariant components are labeled with the index $\mathfrak{a}$.
    Up to normalization, the $n$ vectors connected to the adjugate matrix 
    $\psi_{i}^{j}(s = 1,\lambda_{\mathfrak{d}})$ in the initial coordinate system  for $i\in\Sigma_{n} = (1,\ldots,n)$ coincide since they are parallel to one another.
    Each of them can be selected as the appropriate eigenvector. The eigenvectors for the entire set of eigenvalues of matrix $\mathbf{H}$  are orthogonal and can be normalized to unity.
\end{theorem}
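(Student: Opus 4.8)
The plan is to work first in the coordinate system that diagonalizes $\mathbf{H}$, where every contraction collapses against a diagonal matrix, and then transport the result back to the initial frame using unitarity of $\mathbf{U}$. Concretely, I would begin from the formal solution (\ref{eigenv}) with $\mathbf{A}=\mathbf{C}(\lambda_{\mathfrak{d}})$ and apply the similarity transformation to the diagonal frame, in which $C_{\mathfrak{b}}^{\mathfrak{a}}(\lambda_{\mathfrak{d}})=\delta_{\mathfrak{b}}^{\mathfrak{a}}(\lambda_{\mathfrak{d}}-\lambda_{\mathfrak{b}})$. Inserting this into the adjugate definition (\ref{adj def}) produces a contraction of the generalized Kronecker symbol $\delta^{\mathfrak{a}\mathfrak{a}_{2}\ldots\mathfrak{a}_{n}}_{\mathfrak{b}\mathfrak{b}_{2}\ldots\mathfrak{b}_{n}}$ against $n-1$ diagonal factors.

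The first and most delicate step is to evaluate this contraction. The generalized Kronecker symbol forces $(\mathfrak{a},\mathfrak{a}_{2},\ldots,\mathfrak{a}_{n})$ to be a permutation of $(\mathfrak{b},\mathfrak{b}_{2},\ldots,\mathfrak{b}_{n})$; when paired against the diagonal factors $C_{\mathfrak{a}_{k}}^{\mathfrak{b}_{k}}$, only the terms with each $\mathfrak{a}_{k}=\mathfrak{b}_{k}$ survive, and the $(n-1)!$ equivalent orderings cancel the prefactor $1/(n-1)!$. This yields
\begin{equation}
\psi_{\mathfrak{b}}^{\mathfrak{a}}(s=1,\lambda_{\mathfrak{d}})=\delta_{\mathfrak{b}}^{\mathfrak{a}}\prod_{\mathfrak{c}\neq\mathfrak{b}}\left(\lambda_{\mathfrak{d}}-\lambda_{\mathfrak{c}}\right).
\nonumber
\end{equation}
I expect this combinatorial reduction to be the main obstacle, since it requires careful bookkeeping of the antisymmetrization and of the $\pm 1$ weights attached to each surviving minor.

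Once this formula is in hand, the remaining assertions follow quickly. Because $\mu_{\mathbf{H}}(\lambda_{\mathfrak{d}})=1$, the value $\lambda_{\mathfrak{d}}$ appears exactly once among the eigenvalues, so $\prod_{\mathfrak{c}\neq\mathfrak{b}}(\lambda_{\mathfrak{d}}-\lambda_{\mathfrak{c}})$ is nonzero precisely when $\mathfrak{b}=\mathfrak{d}$; together with the factor $\delta_{\mathfrak{b}}^{\mathfrak{a}}$ this isolates the single nonvanishing component at $\mathfrak{a}=\mathfrak{b}=\mathfrak{d}$, whose column is proportional to the standard basis vector, i.e. the eigenvector of $\mathbf{H}^{\prime}$ for $\lambda_{\mathfrak{d}}$. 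Applying the reciprocal transformation gives the initial-frame form (\ref{eigen vec equiv}), in which the dependence on the covariant label $i$ enters only through the scalar $U_{i}^{\mathfrak{d}}$; hence all $n$ columns are parallel and define a single eigenvector direction $U_{\mathfrak{d}}^{j}$.

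Finally, to establish orthogonality I would compute the Hermitian product of the eigenvectors attached to $\lambda_{\mathfrak{b}}$ and $\lambda_{\mathfrak{d}}$ and use the unitarity relation $\sum_{j}U_{j}^{\mathfrak{b}}U_{\mathfrak{d}}^{j}=\delta_{\mathfrak{d}}^{\mathfrak{b}}$ to collapse the sum over $j$. The product is then proportional to $\delta_{\mathfrak{d}}^{\mathfrak{b}}$, vanishing for distinct eigenvalues, while for $\mathfrak{b}=\mathfrak{d}$ it equals a strictly positive multiple of $\prod_{\mathfrak{c}\neq\mathfrak{b}}(\lambda_{\mathfrak{b}}-\lambda_{\mathfrak{c}})^{2}$, positive because the eigenvalues of a Hermitian matrix are real and, under the multiplicity-one hypothesis, distinct from $\lambda_{\mathfrak{b}}$. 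Therefore the whole family of eigenvectors is orthogonal and can be normalized to unity, completing the argument.
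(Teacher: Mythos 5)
Your proposal is correct and follows essentially the same route as the paper: transform to the diagonal frame, contract the generalized Kronecker symbol against the diagonal characteristic matrix to obtain $\psi_{\mathfrak{b}}^{\mathfrak{a}}=\delta_{\mathfrak{b}}^{\mathfrak{a}}\prod_{\mathfrak{c}\neq\mathfrak{b}}(\lambda_{\mathfrak{d}}-\lambda_{\mathfrak{c}})$, transport back via $\mathbf{U}$ to see that the $i$-dependence enters only through the scalar $U_{i}^{\mathfrak{d}}$, and establish orthogonality from unitarity of $\mathbf{U}$. No substantive differences from the paper's argument.
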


%%%%%%%%%%%%%%%%%%%%%%%%%%%%%%%%%

\begin{example}
Two-level system
\end{example}

A two-level system is an excellent illustration of specific characteristics 
of quantum mechanical systems, which have applications in solid-state physics, elementary particle physics to describe oscillations of kaons and neutrinos, quantum chemistry to model avoiding crossing phenomena, and many others. 
A two-level system is discussed 
in Appendix C of the monograph \cite{Bilenky:2018}, 
Sect. 2.1 of Ref. \cite{Krivoruchenko:2011}, 
and Sect. III of Ref. \cite{Blaum:2020}; Task 1 to Sect. 39 of the monograph \cite{Landau:1987} also serves as an excellent guideline. 

\subsection*{\small {a). An ad hoc solution}}

%%%%%%%%%%%%%%%%%%%%%%%%%%%%%%%%%%%%%%%%%%%%%%%%%%%
The Hamiltonian $\mathbf{H}$ can be represented by a $2\times 2$ matrix as the
sum of a matrix proportional to the identity matrix $\mathbf{I}$ and terms
proportional to the Pauli matrices: 
\begin{equation}
\mathbf{H}=a\mathbf{I}+b\mbox{\boldmath$\sigma$}\mathbf{n}.  \label{A 1}
\end{equation}
Here, $a$ and $b$ are real numbers, $\mathbf{n}$ is a real unit vector, and $%
\mathbf{H}$ is a Hermitian matrix. Given that $(\mbox{\boldmath$\sigma$}\mathbf{n})^2 = 1$, $\mathbf{H}$'s eigenvalues equal  
\begin{equation}
\lambda _{\pm }=a\pm b.  %\tag{3.L1}
\end{equation}
Projection operators 
\begin{equation}
P_{\pm }=\frac{1\pm \mbox{\boldmath$\sigma$}\mathbf{n}}{2}
\end{equation}
act on a reference spinor, which we select to be $(1,0)^{\mathrm{T}}$, to yield  eigenvectors
\begin{equation}
\left( 
\begin{array}{l}
\chi _{\pm }^{1} \\ 
\chi _{\pm }^{2}%
\end{array}%
\right) =N_{\mathfrak{\pm }}P_{\pm }\left( 
\begin{array}{l}
1 \\ 
0%
\end{array}%
\right) .  \label{eigenvectors}
\end{equation}%
%After normalization, the eigenvectors match columns of the mixing matrix $U_{\pm }^{j}$. 
The normalization factors 
$N_{\pm }$ are determined from the equation $\chi _{\pm }^{\dagger }\chi
_{\pm }=1$, while orthogonality $\chi _{\pm }^{\dagger }\chi _{\mp }=0$ is provided by $P_{+}P_{-}=0$. 

In terms of traces of $\mathbf{H}$, $\mbox{\boldmath$\sigma$}\mathbf{H}$ and $\mathbf{H}^{2}$ the parameters $a$, $b$, and $\mathbf{n}$  are determined by the equations
\begin{equation}
\frac{1}{2}\mathrm{Tr}[\mathbf{H}]=a,\ \ \ \ \frac{1}{2}\mathrm{Tr}[\mathbf{H%
}^{2}]=a^{2}+b^{2}  \nonumber % \label{A 2 and 3}
\end{equation}%
and%
\begin{equation}
\mathbf{n}=
\frac{\frac{1}{2}\mathrm{Tr}[\mbox{\boldmath$\sigma$}\mathbf{H}]}
     {\sqrt{\frac{1}{2}\mathrm{Tr}[\mathbf{H}^{2}]-\frac{1}{4}\mathrm{%
Tr}[\mathbf{H}]^{2}}}.  \nonumber % \label{A 5}
\end{equation}

Let us examine a two-level system, assuming the most commonly used parametrization
\begin{equation}
\mathbf{H}=\left\Vert 
\begin{array}{ll}
V_{11} & V_{12} \\ 
V_{21} & V_{22}%
\end{array}%
\right\Vert , %\nonumber %
\end{equation}%
where $V_{11}$ and $V_{22}$ are real, $V_{12}$ and $V_{21}=V_{12}^{\ast }$
are complex. 

The eigenvalues of $\mathbf{H}$ are as follows: 
\begin{equation}
\lambda _{\pm }=\frac{V_{11}+V_{22}\pm \omega }{2},
\end{equation}
where 
\begin{equation*}
\omega =\sqrt{\left( V_{11}-V_{22}\right) ^{2}+4V_{12}V_{21}}.
\end{equation*}
Projection operators take the form 
\begin{equation}
P_{\pm }=\left\Vert 
\begin{array}{cc}
\frac{1}{2}\pm \frac{V_{11}-V_{22}}{2\omega } & \pm \frac{V_{12}}{\omega }
\\ 
\pm \frac{V_{21}}{\omega } & \frac{1}{2}\mp \frac{V_{11}-V_{22}}{2\omega }%
\end{array}%
\right\Vert . %\nonumber %
\end{equation}%

Using Eq.~(\ref{eigenvectors}), one finds the eigenvectors: 
\begin{equation}
\chi _{\pm }=\left( 
\begin{array}{l}
\left( \frac{V_{21}^{\ast }}{|V_{21}|}\right) ^{1/2}\left( \frac{1}{2}\pm 
\frac{V_{11}-V_{22}}{2\omega }\right) ^{1/2} \\ 
\pm \left( \frac{V_{21}}{|V_{21}|}\right) ^{1/2}\left( \frac{1}{2}\mp \frac{%
V_{11}-V_{22}}{2\omega }\right) ^{1/2}%
\end{array}%
\right) .  \label{two level ev}
\end{equation}%

\subsection*{{\protect\small {b). Eigenvectors in terms of adjugate matrix}}}

The characteristic matrix is equal to 
\begin{equation*}
\mathbf{C}(\lambda _{\pm })=\lambda _{\pm }-\mathbf{H=}\left\Vert 
\begin{array}{cc}
\frac{-V_{11}+V_{22}\pm \omega }{2} & -V_{12} \\ 
-V_{21} & \frac{V_{11}-V_{22}\pm \omega }{2}%
\end{array}%
\right\Vert .
\end{equation*}%
%
%Prior to building the adjugate matrix, the eigenvalues should be determined.

Trace identities for a $2\times 2$ matrix $\mathbf{A}$
give \cite{Kondratyuk:1992,Krivoruchenko:2016} 
\begin{equation} \label{TI}
\mathrm{adj}\ \mathbf{A}=\mathrm{Tr}\left[ \mathbf{A}\right] \mathbf{I}-%
\mathbf{A}.
\end{equation}%
%This equation is used for calculating $\psi _{i}^{j}(1,\lambda _{\pm })$. 
The eigenvectors $\psi _{i}^{j}(1,\lambda _{\pm })$ paired with the
corresponding eigenvalues $\lambda =\lambda _{\pm }$ can be found 
with the use of Eqs.~(\ref{eigenv}) and (\ref{TI}) to give
\begin{eqnarray}
\left\Vert 
\begin{array}{cc}
\psi _{1}^{1}(1,\lambda _{\pm }) & \psi _{2}^{1}(1,\lambda _{\pm }) \\ 
\psi _{1}^{2}(1,\lambda _{\pm }) & \psi _{2}^{2}(1,\lambda _{\pm })%
\end{array}%
\right\Vert &=&\mathrm{adj}\ \mathbf{C}(\lambda _{\pm })  \nonumber \\
&=&\left\Vert 
\begin{array}{cc}
\frac{V_{11}-V_{22}\pm \omega }{2} & V_{12} \\ 
V_{21} & \frac{-V_{11}+V_{22}\pm \omega }{2}%
\end{array}%
\right\Vert .
\end{eqnarray}

The area of a parallelogram stretched along the eigenvectors $%
\psi_{i}^{j}(1,\lambda_{\pm})$ is zero: 
\begin{equation*}
\psi_{1}^{j}(1,\lambda_{\pm}) \psi_{2}^{k}(1,\lambda_{\pm}) -
\psi_{1}^{k}(1,\lambda_{\pm}) \psi_{2}^{j}(1,\lambda_{\pm}) = 0, 
\end{equation*}
so they are parallel. The eigenvectors $\psi_{i}^{j}(1,\lambda_{%
\pm})$ are also parallel to the eigenvectors $\chi^j_{\pm}$, because 
\begin{equation*}
\psi_{i}^{j}(1,\lambda_{\pm}) \chi^k_{\pm} - \psi_{i}^{k}(1,\lambda_{\pm})
\chi^j_{\pm} = 0. 
\end{equation*}
The eigenvectors $\psi_{1}^{j}(1,\lambda_{\pm})$, $\psi_{2}^{j}(1,\lambda_{\pm})$, and $\chi^j_{\pm} $ , when normalized,  coincide up to a phase factor.

In summary, we have demonstrated that the novel method effectively reconstructs solutions to the two-level system.

%%%%%%%%%%%%%%%%%%%%%%%%%%%%%%%%%%%%%%%%%%%5%%%%%%%%%

%%%%%%%%%%%%%%%%%%%%%%%%%%%%%%%%%%%%%%%%%%%%%%%%%%%%%%%%%%%
%%%%%%%%%%%%%%%%%%%%%%%%%%%%%%%%%%%%%%%%%%%%%%%%%%%%%%%%%%%
%%%%%%%%%%%%%%%%%%%%%%%%%%%%%%%%%%%%%%%%%%%%%%%%%%%%%%%%%%%
\section{Eigenvectors for degenerate eigenvalues}
%\section{Eigenvectors of matrices with $\mathrm{rank}( \mathbf{C} ) < n-1$}
%%%%%%%%%%%%%%%%%%%%%%%%%%%%%%%%%%%%%%%%%%%%%%%%%%%%%%%%%%%
%%%%%%%%%%%%%%%%%%%%%%%%%%%%%%%%%%%%%%%%%%%%%%%%%%%%%%%%%%%
%%%%%%%%%%%%%%%%%%%%%%%%%%%%%%%%%%%%%%%%%%%%%%%%%%%%%%%%%%%
\renewcommand{\theequation}{4.\arabic{equation}}
\setcounter{equation}{0}

To remove the degeneracy of an eigenvalue $\lambda$, one can always introduce a perturbation to the matrix $\mathbf{H}$. 
Given $\mathrm{rank}(\mathbf{C}(\lambda)) = n-1$, the eigenvectors can be constructed using the formalism developed in the previous section.
The crucial issue, however, is whether a formalism covering the same array of concepts 
can be developed and directly applied to systems with degenerate eigenvalues.

The reduced complements of order $r,s$ of minor determinants of the characteristic matrix
\begin{align}
\psi_{i_{1}\mathfrak{\ldots} i_{r}}^{j_{1}\mathfrak{\ldots} j_{r}} (s,\lambda) 
= (\underset{s}\det \, \mathbf{C}(\lambda))^{{j_{1} \ldots j_{r}}}_{i_{1} \ldots i_{r}} 
\label{define wf adj tensor}
\end{align}
%defined in Eq. (\ref{complement def}), 
are algebraic objects that, in the general scenario, might be associated with the eigenvectors.

\subsection{Basic properties of $\psi_{i_{1}\mathfrak{\ldots} i_{r}}^{j_{1}\mathfrak{\ldots} j_{r}} (s,\lambda)$}

Let pairwise distinct indices $\mathfrak{b}_{1}$, \ldots, $\mathfrak{b}_{s} \in \Sigma_{n}$ enumerate eigenvectors of matrix $\mathbf{H}$ associated to an eigenvalue $\lambda_{\mathfrak{d}}$ of
algebraic multiplicity $\mu_{\mathbf{H}}(\lambda_{\mathfrak{d}}) = s > 1$. 
In the diagonal coordinate system, the expression (\ref{define wf adj tensor}) for $r=s$ simplifies to
\begin{align}
\psi_{\mathfrak{b}_{1}\ldots \mathfrak{b}_{s}}^
     {\mathfrak{a}_{1}\ldots a_{s}}(s,\lambda_{\mathfrak{d}}) & = U_{j_{1}}^{\mathfrak{a}_{1}}\ldots 
U _{j_{s}}^{\mathfrak{a}_{s}}\psi_{i_{1}\mathfrak{\ldots} i_{s}}^{j_{1}\mathfrak{\ldots}%
 j_{s}}(s,\lambda_{\mathfrak{d}})U_{\mathfrak{b}_{1}}^{i_{1}}\ldots U_{\mathfrak{b}%
_{s}}^{i_{s}}  \notag \\
& =\frac{1}{(n-s)!}\delta^{\mathfrak{a}_{1}\ldots  \mathfrak{a}_{s}  \mathfrak{a}_{s + 1} \ldots\mathfrak{a}_{n}}
                         _{\mathfrak{b}_{1}\ldots  \mathfrak{b}_{s}\mathfrak{b}_{s + 1} \ldots\mathfrak{b}_{n}}
C_{\mathfrak{a}_{s+1}}^{\mathfrak{b}_{s+1}}(\lambda_{\mathfrak{d}})\ldots 
C_{\mathfrak{a}_{n}  }^{\mathfrak{b}_{n}}(\lambda_{\mathfrak{d}})  \notag \\
& =\delta_{\mathfrak{b}_{1} \ldots \mathfrak{b}_{s}}
         ^{\mathfrak{a}_{1} \ldots \mathfrak{a}_{s}}
         \prod\limits_{\mathfrak{c}\notin\left( \mathfrak{b}%
_{1},\ldots,\mathfrak{b}_{s}\right) }\left( \lambda_{\mathfrak{d}}-\lambda_{\mathfrak{c}%
}\right) .  \label{r ge 2}
\end{align}
Whenever the set of indices $\mathfrak{a}_{1}$, $\ldots$, $\mathfrak{a}_{s}$ coincides up to a
permutation with the set of indices $\mathfrak{b}_{1}$, $\ldots$, $\mathfrak{b}_{s}$ and these sets both correspond an $s$-fold degenerate eigenvalue $\lambda_{\mathfrak{d}}$ with 
$\mathfrak{d} \in (\mathfrak{b}_{1},\ldots,\mathfrak{b}_{s})$, 
the components of complement tensor
$\psi_{\mathfrak{b}_{1} \ldots \mathfrak{b}_{s}}^{\mathfrak{a}_{1} \ldots \mathfrak{a}_{s}}(\lambda _{\mathfrak{d}})$ do not vanish. 
Equation (\ref{r ge 2}) contains under the
product sign $n-s$ multipliers $\lambda_{\mathfrak{d}} - \lambda_{\mathfrak{c}} \neq 0$. All other components vanish. 

In the initial coordinate system, the complement 
%$\psi _{i_{1}\mathfrak{\ldots }i_{s}}^{j_{1}\mathfrak{\ldots }j_{s}}(s,\lambda _{\mathfrak{d}})$
\begin{equation}
\psi _{i_{1}\mathfrak{\ldots }i_{s}}^{j_{1}\mathfrak{\ldots }%
j_{s}}(s,\lambda _{\mathfrak{d}})=U_{[\mathfrak{b}_{1}}^{j_{1}}\ldots U_{%
\mathfrak{b}_{s}]}^{j_{s}}U_{i_{1}}^{\mathfrak{b}_{1}}\ldots U_{i_{s}}^{%
\mathfrak{b}_{s}}\prod\limits_{\mathfrak{c}\notin \left( \mathfrak{b}%
_{1},\ldots ,\mathfrak{b}_{s}\right) }\left( \lambda _{\mathfrak{d}}-\lambda
_{\mathfrak{c}}\right) \nonumber %
\end{equation}
with $\mathfrak{d}\in (\mathfrak{b}_{1},\ldots ,\mathfrak{b}_{s})$ 
has the meaning of the scaled oriented volume of a
parallelepiped stretched over $s$ linearly independent vectors 
$U_{\mathfrak{b}}^{j}$. %\footnote{
The vectors $U_{\mathfrak{b}}^{j}$ are the eigenvectors, 
the explicit expressions for which we are searching.

The complement tensors $\psi_{i_{1} \ldots i_{r}}^{j_{1} \ldots j_{r}}(r, \lambda_{\mathfrak{d}})$ with $r < \mu_{\mathbf{H}}(\lambda_{\mathfrak{d}})=s$, however, 
vanish identically because the product sign in Eq.~(\ref{r ge 2}) contains $s-r \geq 1$ zero
multipliers $\lambda_{\mathfrak{d}}-\lambda_{\mathfrak{c}}$. 
In particular, 
\begin{align}
& s = 1 \Rightarrow \det \mathbf{C}(\lambda_{\mathfrak{d}}) = 0, \label{4.4.1} \\
& s = 2 \Rightarrow \det \mathbf{C}(\lambda_{\mathfrak{d}}) = \psi^j_i(1, \lambda_{\mathfrak{d}}) = 0, \label{4.4.2} \\
& s = 3 \Rightarrow \det \mathbf{C}(\lambda_{\mathfrak{d}}) = \psi^j_i(1, \lambda_{\mathfrak{d}}) = \psi^{j_1 j_2}_{i_1 i_2}(2,\lambda_{\mathfrak{d}}) = 0, \label{4.4.3}
\end{align}
and so on.

%%%%%%%%%%%%%%%%%%%%%%%%%%%%%%%%%%%%%%%%%%%%%%%%%%%%%%%%%%%%%%%%%%%
We have thus proven
\begin{theorem}
    Let $\mathbf{H}$ be an $n \times n$ Hermitian matrix, and let $\lambda_{\mathfrak{d}}$ be one of its eigenvalues of algebraic multiplicity $\mu_{\mathbf{H}}(\lambda_{\mathfrak{d}})=s > 1$. 
 In the diagonal coordinate system, the complement $\psi_{\mathfrak{b}_{1} \ldots \mathfrak{b}_{s}}^{\mathfrak{a}_{1} \ldots \mathfrak{a}_{s}}(s,\lambda _{\mathfrak{d}})$ of the minors of the characteristic matrix $\mathbf{C}(\lambda_{\mathfrak{d}})$ has non-vanishing components if and only if the index sets $(\mathfrak{a}_{1},\ldots, \mathfrak{a}_{s})$ and $(\mathfrak{b}_{1},\ldots, \mathfrak{b}_{s})$ coincide up to a permutation and $\mathfrak{d} \in (\mathfrak{b}_{1},\ldots, \mathfrak{b}_{s})$. Furthermore, 
     in all reference frames, the complements of order $r$ with $r<s$ vanish identically.
\end{theorem}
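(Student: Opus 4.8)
The plan is to push every computation into the diagonal coordinate system, where the characteristic matrix is $C_{\mathfrak{a}}^{\mathfrak{b}}(\lambda_{\mathfrak{d}})=\delta_{\mathfrak{a}}^{\mathfrak{b}}(\lambda_{\mathfrak{d}}-\lambda_{\mathfrak{a}})$, to reduce the complement tensor to a product of two transparent factors, to read off the non-vanishing pattern by a counting argument, and finally to invoke tensoriality to reach arbitrary frames. First I would record the reduction underlying Eq.~(\ref{r ge 2}) for an \emph{arbitrary} order $r\le s$: substituting the diagonal $C_{\mathfrak{a}}^{\mathfrak{b}}(\lambda_{\mathfrak{d}})$ into the defining formula~(\ref{complement def}) and collapsing the $n-r$ diagonal factors against the generalized Kronecker symbol by means of the Appendix~A contraction identity~(\ref{APP 3}) gives
\begin{equation*}
\psi_{\mathfrak{b}_{1}\ldots \mathfrak{b}_{r}}^{\mathfrak{a}_{1}\ldots \mathfrak{a}_{r}}(r,\lambda_{\mathfrak{d}}) = \delta_{\mathfrak{b}_{1} \ldots \mathfrak{b}_{r}}^{\mathfrak{a}_{1} \ldots \mathfrak{a}_{r}} \prod\limits_{\mathfrak{c}\notin( \mathfrak{b}_{1},\ldots,\mathfrak{b}_{r}) }( \lambda_{\mathfrak{d}}-\lambda_{\mathfrak{c}}) ,
\end{equation*}
which for $r=s$ is exactly Eq.~(\ref{r ge 2}).

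For the first assertion I would set $r=s$ and decide when each factor survives. The generalized Kronecker symbol $\delta_{\mathfrak{b}_{1} \ldots \mathfrak{b}_{s}}^{\mathfrak{a}_{1} \ldots \mathfrak{a}_{s}}$ is nonzero precisely when $\mathfrak{b}_{1},\ldots,\mathfrak{b}_{s}$ are pairwise distinct and $(\mathfrak{a}_{1},\ldots,\mathfrak{a}_{s})$ is a permutation of $(\mathfrak{b}_{1},\ldots,\mathfrak{b}_{s})$; this is the ``coincide up to a permutation'' clause. The product is nonzero precisely when no index $\mathfrak{c}$ outside $(\mathfrak{b}_{1},\ldots,\mathfrak{b}_{s})$ carries the eigenvalue $\lambda_{\mathfrak{d}}$. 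Here I would use that $\mu_{\mathbf{H}}(\lambda_{\mathfrak{d}})=s$ means there are exactly $s$ indices $\mathfrak{c}$ with $\lambda_{\mathfrak{c}}=\lambda_{\mathfrak{d}}$; since $(\mathfrak{b}_{1},\ldots,\mathfrak{b}_{s})$ also has $s$ elements, these must be exhausted by the degenerate block, so that $(\mathfrak{b}_{1},\ldots,\mathfrak{b}_{s})$ coincides with it and, in particular, $\mathfrak{d}\in(\mathfrak{b}_{1},\ldots,\mathfrak{b}_{s})$. Conjoining the two factor conditions reproduces the claimed biconditional.

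For the second assertion I would keep $r<s$ in the displayed formula. The product then ranges over $n-r$ indices $\mathfrak{c}$, of which at most $r$ can belong to the degenerate block; hence at least $s-r\ge 1$ of the $s$ indices with $\lambda_{\mathfrak{c}}=\lambda_{\mathfrak{d}}$ lie outside $(\mathfrak{b}_{1},\ldots,\mathfrak{b}_{r})$ and contribute a vanishing factor $\lambda_{\mathfrak{d}}-\lambda_{\mathfrak{c}}=0$. Thus every component of $\psi_{i_{1}\ldots i_{r}}^{j_{1}\ldots j_{r}}(r,\lambda_{\mathfrak{d}})=(\underset{r}{\det}\,\mathbf{C}(\lambda_{\mathfrak{d}}))_{i_{1}\ldots i_{r}}^{j_{1}\ldots j_{r}}$ vanishes in the diagonal frame, and the special cases~(\ref{4.4.1})--(\ref{4.4.3}) follow. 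Because this object is a genuine tensor whose components transform multilinearly under the unitary change of basis, vanishing in one frame forces vanishing in all frames, which secures the ``in all reference frames'' clause.

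The hard part will be the counting step for the product factor: one must match the $s$ indices of the degenerate eigenspace against the $s$ available slots $\mathfrak{b}_{1},\ldots,\mathfrak{b}_{s}$ carefully enough to obtain the exact biconditional rather than a one-sided implication, and likewise count the surplus $s-r$ of excluded degenerate indices in the $r<s$ case. By contrast, the algebraic reduction to the displayed product form is routine once the contraction identity~(\ref{APP 3}) is in hand, and the passage from the diagonal frame to arbitrary frames is immediate from the tensorial character of the complement.
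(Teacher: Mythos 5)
Your proposal is correct and follows essentially the same route as the paper: both reduce the complement tensor in the diagonal frame to $\delta_{\mathfrak{b}_{1}\ldots\mathfrak{b}_{r}}^{\mathfrak{a}_{1}\ldots\mathfrak{a}_{r}}\prod_{\mathfrak{c}\notin(\mathfrak{b}_{1},\ldots,\mathfrak{b}_{r})}(\lambda_{\mathfrak{d}}-\lambda_{\mathfrak{c}})$ via Eq.~(\ref{r ge 2}), read off the non-vanishing pattern from the Kronecker symbol and the multiplicity count, and obtain the $r<s$ vanishing from the $s-r$ zero factors. Your treatment is, if anything, slightly more explicit than the paper's about the ``only if'' direction and about invoking tensoriality to pass from the diagonal frame to all frames.
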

%%%%%%%%%%%%%%%%%%%%%%%%%%%%%%%%%%%%%%%%%%%%%%%%%%%%%%%%%%%%%%%%%%%

%Generalizations of the other parts are worked out below.

%The case of $\psi(n,\lambda)=\det \mathbf{C} $ follows for $s=0$, and Eq.~(\ref{eigenv}) is recovered for $s=1$.

\subsection{Product of $A_{i}^{j}$ and $(\underset{s}{\det }\, \mathbf{A})_{i_{1}\mathfrak{\ldots }%
i_{r}}^{j_{1}\mathfrak{\ldots }j_{r}}$}

In search of an extension of the Eq. (\ref{basic eq adj}), we consider the product 
of the characteristic matrix and its reduced complement. The results of this subsection are 
valid for all matrices $\mathbf{A}$ and  given in general form.
\begin{lemma}
Contraction of the covariant index of an $n\times n$ matrix $\mathbf{A}$
with one of the contravariant indices of the reduced complement of order $r,s
$ of the minor determinants of order $n-s$ of matrix $\mathbf{A}$ can be
represented in two equivalent forms: through an alternating sum of minor
determinants of order $n-s+1$:
\begin{align}
A_{j_{r}}^{j}(\underset{s}{\det }\, \mathbf{A})_{i_{1}\mathfrak{\ldots }%
i_{r-1}i}^{j_{1}\mathfrak{\ldots }j_{r-1}j_{r}} &= 
\frac{1}{(n-s)!(n-s + 1)!}\delta_{i_{1} \ldots i_{r-1} i i_{s+1}\ldots i_{n}}^{j_{1}\ldots
j_{r-1}j_{s}j_{s+1}\ldots j_{n}} \nonumber \\
&~~~~~~~~~~~~~~~~~~~~~~~~~~~~~~\times (\det \mathbf{A})_{j_{s}j_{s+1}\ldots
j_{n}}^{j i_{s+1}\ldots i_{n}},  \label{great det}
\end{align}%
and an alternating sum of complements of order $s-1$ to
minor determinants of order $n-s+1$: 
\begin{align}
&A_{j_{r}}^{j}(\underset{s}{\det }\, \mathbf{A})_{i_{1}\mathfrak{\ldots }%
i_{r-1}i}^{j_{1}\mathfrak{\ldots }j_{r-1}j_{r}} = 
\frac{1}{(s-1)!(s - r)!}\delta_{i_{1}\ldots i_{r-1}j_{r+1}\ldots j_{s}i}^{k_{1}\ldots k_{s-1}j}
\nonumber \\
&~~~~~~~~~~~~~~~~~~~~~~~~~~~~~~~~~~~~~~~~~~~~~~~\times 
(\underset{s-1}{\det }\, \mathbf{A})_{k_{1}\ldots k_{s-1}}^{j_{1}\ldots
j_{r-1}j_{r+1}\ldots j_{s}}.  \label{great}
\end{align}
Equation (\ref{great}) is the generalization of Eq. (\ref{basic eq adj}).
\end{lemma}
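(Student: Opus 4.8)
The plan is to treat both identities as two different regroupings of one and the same expression, obtained by inserting the definition (\ref{complement def}) of the reduced complement into the common left-hand side. Writing the order-$r,s$ reduced complement explicitly, the left-hand side becomes
\[
A_{j_{r}}^{j}(\underset{s}{\det}\mathbf{A})_{i_{1}\ldots i_{r-1}i}^{j_{1}\ldots j_{r-1}j_{r}}
=\frac{1}{(n-s)!}\,A_{j_{r}}^{j}\,
\delta_{i_{1}\ldots i_{r-1}i\,i_{s+1}\ldots i_{n}}^{j_{1}\ldots j_{r-1}j_{r}\,j_{s+1}\ldots j_{n}}\,
A_{j_{s+1}}^{i_{s+1}}\ldots A_{j_{n}}^{i_{n}},
\]
i.e. a contraction of $n-s+1$ copies of $\mathbf{A}$ (the explicit factor $A_{j_{r}}^{j}$ together with the $n-s$ factors already present) against a single generalized Kronecker symbol of order $r+n-s$. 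Both (\ref{great det}) and (\ref{great}) then follow by reorganizing this one object, and the whole argument rests on the antisymmetry and contraction properties of the generalized Kronecker symbol collected in Appendix A, in particular (\ref{APP 3}).

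For (\ref{great det}) I would first rename the summed index $j_{r}\to j_{s}$, so that the $n-s+1$ matrix elements carry the covariant labels $j_{s},j_{s+1},\ldots,j_{n}$, precisely the upper indices in which $\delta$ is totally antisymmetric. By the first line of Eq.~(\ref{def 3}), the minor $(\det\mathbf{A})_{j_{s}j_{s+1}\ldots j_{n}}^{j\,i_{s+1}\ldots i_{n}}$ of order $n-s+1$ is the alternating sum over the $(n-s+1)!$ permutations of the product $A_{j_{s}}^{j}A_{j_{s+1}}^{i_{s+1}}\ldots A_{j_{n}}^{i_{n}}$, and contracting any one such permutation with the antisymmetric $\delta$ returns the same value up to the compensating sign. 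Hence contracting $\delta$ with the minor reproduces exactly $(n-s+1)!$ copies of the single plain product already present, and (\ref{great det}) emerges with the coefficient $1/[(n-s)!\,(n-s+1)!]$. This step is essentially bookkeeping and I expect it to go through cleanly.

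For (\ref{great}) the same expression must instead be recognized as a generalized Kronecker symbol of order $s$ multiplying a complement of order $s-1$. Here I would read the definition (\ref{complement def}) in reverse: the order-$(s-1)$ complement $(\underset{s-1}{\det}\mathbf{A})_{k_{1}\ldots k_{s-1}}^{j_{1}\ldots j_{r-1}j_{r+1}\ldots j_{s}}$ is itself a generalized Kronecker symbol of order $n$ contracted with the same $n-s+1$ matrix elements. The task is thus to split the order-$(r+n-s)$ symbol of the common expression into the product of the external symbol $\delta_{i_{1}\ldots i_{r-1}j_{r+1}\ldots j_{s}i}^{k_{1}\ldots k_{s-1}j}$ and the internal symbol of the order-$(s-1)$ complement, summing over the $s-1$ auxiliary indices $k_{1},\ldots,k_{s-1}$. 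This is exactly where (\ref{APP 3}) is used, and it is the point at which the free index $j$ of $A_{j_{r}}^{j}$ migrates from inside the matrix product out onto the external symbol. The factor $1/(s-1)!$ arises from the antisymmetrization over $k_{1},\ldots,k_{s-1}$ in the composition, while the factor $1/(s-r)!$ is inherited from the $s-r$ diagonal contractions relating the reduced complement of order $r,s$ to the full complement of order $s$, as recorded in Eq.~(\ref{complement def trace}).

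The main obstacle is precisely this last splitting in (\ref{great}): identifying which $s$ of the index pairs detach to form the external symbol and which feed the order-$(s-1)$ complement, and then checking that the accumulated factorials collapse to $1/[(s-1)!\,(s-r)!]$ and not to some neighbouring combination. As an independent check on the normalization I would specialize to $r=s=1$, where $\delta$ reduces to $\delta_{i}^{j}$ and the order-$0$ complement to $\det\mathbf{A}$, so that (\ref{great}) must reproduce the fundamental adjugate identity (\ref{basic eq adj}); matching this limit pins down any ambiguity in the combinatorial constants. Should the direct splitting prove unwieldy, an alternative route is to derive (\ref{great}) from (\ref{great det}) by re-expressing the order-$(n-s+1)$ minor on the right of (\ref{great det}) through (\ref{complement def}) as an order-$(s-1)$ complement, thereby converting one identity into the other.
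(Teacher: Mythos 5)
Your derivation of Eq.~(\ref{great det}) is complete and is essentially the paper's own argument: insert the definition (\ref{complement def}), rename the dummy $j_{r}\to j_{s}$ so that all $n-s+1$ covariant matrix labels sit in the antisymmetric upper slots of the generalized Kronecker symbol, and observe that replacing the plain product by its antisymmetrization (the order-$(n-s+1)$ minor) divided by $(n-s+1)!$ leaves the contraction unchanged. No issue there.

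For Eq.~(\ref{great}), however, there is a genuine gap: the step you yourself flag as ``the main obstacle'' --- splitting the single order-$(r+n-s)$ Kronecker symbol into the external symbol $\delta_{i_{1}\ldots i_{r-1}j_{r+1}\ldots j_{s}i}^{k_{1}\ldots k_{s-1}j}$ times the internal symbol of the order-$(s-1)$ complement --- is asserted but never performed, and the tool you cite, Eq.~(\ref{APP 3}), cannot do it. Equation~(\ref{APP 3}) only contracts indices \emph{within one} generalized Kronecker symbol; it does not factorize one symbol into a product of two, nor does it explain how the free index $j$ of $A_{j_{r}}^{j}$ ends up on the external symbol while the $j_{1},\ldots,j_{r-1}$ end up on the complement. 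The mechanism the paper uses is different and essential: first make the antisymmetry of the matrix product explicit with a \emph{second} generalized Kronecker symbol $\frac{1}{(n-s+1)!}\delta_{k_{s}\ldots k_{n}}^{j\,i_{s+1}\ldots i_{n}}$, pad both symbols to order $n$ by inserting contracted dummy pairs (this is where the factors $1/(s-r)!$ and $1/(s-1)!$ actually originate), write each order-$n$ symbol as a product of two Levi-Civita symbols via Eq.~(\ref{APP 1}), and then \emph{exchange the covariant Levi-Civita factors between the two symbols}. It is this swap that transplants $j$ and $k_{1},\ldots,k_{s-1}$ onto the external symbol and $j_{1},\ldots,j_{r-1},j_{r+1},\ldots,j_{s}$ onto the symbol that becomes the order-$(s-1)$ complement; the subsequent contraction over $i_{s+1},\ldots,i_{n}$ then collapses the external symbol to order $s$ and cancels $1/(n-s)!$. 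Without this epsilon-exchange (or an equivalent Laplace-expansion argument), your outline does not yet constitute a proof of (\ref{great}); your proposed fallback of converting (\ref{great det}) into (\ref{great}) runs into exactly the same factorization problem, and the $r=s=1$ consistency check only fixes the overall constant, not the tensor structure.
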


\begin{proof}
The left side of Eq. (\ref{great det}) can be written in the form 
\begin{equation*}
\frac{1}{(n-s)!}\delta _{i_{1}\ldots i_{r-1}ii_{s+1}\ldots
i_{n}}^{j_{1}\ldots j_{r-1}j_{s}j_{s+1}\ldots
j_{n}}A_{j_{s}}^{j}A_{j_{s+1}}^{i_{s+1}}\ldots A_{j_{n}}^{i_{n}}.
\end{equation*}%
The product of the matrices $\mathbf{A}$ is antisymmetric in the covariant
and contravariant indices, and it can be replaced with the antisymmetric form 
$A_{[j_{s}}^{j}A_{j_{s+1}}^{i_{s+1}}$ $\ldots A_{j_{n}]}^{i_{n}}/(n-s+1)!$ to
get Eq. (\ref{great det}). 

To make antisymmetry of the product %of the matrices $\mathbf{A}$
in the left side of Eq. (\ref{great}) explicit, we employ the generalized
Kronecker symbol: 
\begin{equation*}
\frac{1}{(n-s)!}\delta _{i_{1}\ldots i_{r-1}ii_{s+1}\ldots
i_{n}}^{j_{1}\ldots j_{r-1}j_{s}j_{s+1}\ldots j_{n}}\frac{1}{(n-s+1)!}\delta
_{k_{s}k_{s+1}\ldots k_{n}}^{j i_{s+1}\ldots
i_{n}}A_{j_{s}}^{k_{s}}A_{j_{s+1}}^{k_{s+1}}\ldots A_{j_{n}}^{k_{n}}.
\end{equation*}%
The combinatorial factor $1/(n-s+1)!$ accounts for the number of
permutations of the indices $k_{s}$, $k_{s+1},$ \ldots , $k_{n}$ involved in
the summation (cf. Eq.~(\ref{APP 0})). 
The first generalized Kronecker symbol of order $n - s + r$ and the second generalized Kronecker 
symbol of order $n-s+1$ can be substituted by products of two Levi-Civita symbols 
according to Eq~(\ref{APP 1}) and written as the generalized Kronecker symbols of order $n$:
\begin{align*}
& \frac{1}{(n-s)!(s - r)!}\delta _{i_{1}\ldots i_{r-1}l_{r+1}\ldots
l_{s}ii_{s+1}\ldots i_{n}}^{j_{1}\ldots j_{r-1}l_{r+1}\ldots
l_{s}j_{s}j_{s+1}\ldots j_{n}} \\
& %~~~~~~~~~~~~~~~~~~~~~~~~~~~~~~~~~~~~~~~~~~~~~~~~~~~~~
~~~~~~~~~~~\times~
\frac{1}{(n-s+1)!(s-1)!} \delta
_{k_{1}\ldots k_{s-1}k_{s}k_{s+1}\ldots k_{n}}^{k_{1}\ldots
k_{s-1}ji_{s+1}\ldots i_{n}} A_{j_{s}}^{k_{s}}A_{j_{s+1}}^{k_{s+1}}\ldots A_{j_{n}}^{k_{n}}.
\end{align*}%
%A similar product of two Levi-Civita symbols enters the first generalized Kronecker symbol. 
The combinatorial factors $1/(s-r)!$ and $1/(s-1)!$ account for the
number of permutations in the index sets $(l_{r+1},\ldots ,l_{s})$ and $(k_{1},\ldots ,k_{s-1})$, respectively. The exchange of
the Levi-Civita symbols with covariant indices between the first and second
generalized Kronecker symbols allows to write, for the left side of Eq.~(\ref%
{great}), 
\begin{align*}
& \frac{1}{(n-s)!(s - r)!}\delta _{i_{1}\ldots i_{r-1}l_{r+1}\ldots l_{s}i
i_{s+1}\ldots i_{n}}^{k_{1}\ldots k_{s-1}ji_{s+1}\ldots i_{n}}\\
& %~~~~~~~~~~~~~~~~~~~~~~~~~~~~~~~~~~~~~~~~~~~~~~~~~~~~~
~~~~~~~~~~~\times~
\frac{1}{(n-s+1)!}\frac{1}{(s-1)!}\delta _{k_{1}\ldots k_{s-1}k_{s}k_{s+1}\ldots
k_{n}}^{j_{1}\ldots j_{r-1}l_{r+1}\ldots l_{s}j_{s}j_{s+1}\ldots j_{n}} 
~A_{j_{s}}^{k_{s}}A_{j_{s+1}}^{k_{s+1}}\ldots A_{j_{n}}^{k_{n}}.
\end{align*}%
The summation over the indices $i_{s+1},$ \ldots , $i_{n}$ generates an
order-$s$ generalized Kronecker symbol and removes the combinatorial factor $%
1/(n-s)!$. The summation over the indices $j_{s}$, $j_{s+1}$, \ldots , $j_{n}
$ and $k_{s}$, $k_{s+1}$, \ldots , $k_{n}$ results with the help of
Eqs.~(\ref{complement def}) and (\ref{define wf adj tensor}) in the expression
entering the right side of Eq.~(\ref{great}).
\end{proof}

Lemma 1 has consequences:

\begin{corollary}
The multiplication of complement $(\underset{s}\det \mathbf{A}) _{i_{1}
\ldots i_{s}}^{k_{1} \ldots k_{s}}$ 
%of minor determinants of an $n\timesn$ matrix $\mathbf{A}$ 
by the product $A_{k_{1}}^{j_{1}}\ldots A_{k_{s}}^{j_{s}}
$ and the subsequent contraction of the indices $k_{1},$ $\ldots$, $k_{s}$
make the expression antisymmetric in the
indices. The product $A_{k_{1}}^{j_{1}}\ldots A_{k_{s}}^{j_{s}}$ turns to an order-$s$
minor determinant. The result of the multiplication can be
written as follows
\begin{equation}  \label{corollary 1}
A_{k_{1}}^{j_{1}}\ldots A_{k_{s}}^{j_{s}} (\underset{s}\det \mathbf{A})
_{i_{1} \ldots i_{s}}^{k_{1} \ldots k_{s}} =\delta_{i_{1}\ldots
i_{s}}^{j_{1}\ldots j_{s}} \det \mathbf{A}.
\end{equation}
This equation is the generalization of Eqs. (\ref{basic eq adj}) 
and (\ref{great}).
\end{corollary}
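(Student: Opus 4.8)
The plan is to substitute the definition of the order-$s$ complement directly and then collapse the resulting product of $n$ matrices into a single determinant. Setting $r=s$ in Eq.~(\ref{complement def}), the complement $(\underset{s}{\det}\,\mathbf{A})_{i_{1}\ldots i_{s}}^{k_{1}\ldots k_{s}}$ carries a normalization $1/(n-s)!$, an order-$n$ generalized Kronecker symbol $\delta_{i_{1}\ldots i_{s}i_{s+1}\ldots i_{n}}^{k_{1}\ldots k_{s}k_{s+1}\ldots k_{n}}$, and the $n-s$ matrices $A_{k_{s+1}}^{i_{s+1}}\ldots A_{k_{n}}^{i_{n}}$ summed over the dummy pairs $(i_{s+1},k_{s+1}),\ldots,(i_{n},k_{n})$. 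Multiplying by $A_{k_{1}}^{j_{1}}\ldots A_{k_{s}}^{j_{s}}$ and contracting $k_{1},\ldots,k_{s}$ simply appends $s$ further matrices, so the left-hand side of Eq.~(\ref{corollary 1}) becomes one order-$n$ Kronecker symbol contracted against a full product of $n$ matrices $\mathbf{A}$.

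First I would recognize that this product of $n$ matrices against the order-$n$ Kronecker symbol is exactly the determinant identity underlying Eq.~(\ref{det}): the total antisymmetry of $\delta_{i_{1}\ldots i_{n}}^{k_{1}\ldots k_{n}}$ forces the antisymmetrization of the upper indices of $A_{k_{1}}^{m_{1}}\ldots A_{k_{n}}^{m_{n}}$, and by the permutation expansion of the determinant one obtains $\delta_{i_{1}\ldots i_{n}}^{k_{1}\ldots k_{n}}A_{k_{1}}^{m_{1}}\ldots A_{k_{n}}^{m_{n}}=\delta_{i_{1}\ldots i_{n}}^{m_{1}\ldots m_{n}}\det\mathbf{A}$. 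Applying this with $(m_{1},\ldots,m_{n})=(j_{1},\ldots,j_{s},i_{s+1},\ldots,i_{n})$ turns the left side into $\frac{1}{(n-s)!}\delta_{i_{1}\ldots i_{s}i_{s+1}\ldots i_{n}}^{j_{1}\ldots j_{s}i_{s+1}\ldots i_{n}}\det\mathbf{A}$, with the indices $i_{s+1},\ldots,i_{n}$ still summed. The conceptual observation accompanying the statement — that contracting $A_{k_{1}}^{j_{1}}\ldots A_{k_{s}}^{j_{s}}$ against the antisymmetric complement antisymmetrizes the $j$-indices and thereby converts that product into an order-$s$ minor determinant in the sense of Eq.~(\ref{def 3}) — is automatically realized inside this calculation, so it needs no separate argument.

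Second I would contract the remaining $n-s$ coincident index pairs in that Kronecker symbol. The trace identity for the generalized Kronecker symbol, Eq.~(\ref{APP 3}), gives $\delta_{i_{1}\ldots i_{s}i_{s+1}\ldots i_{n}}^{j_{1}\ldots j_{s}i_{s+1}\ldots i_{n}}=(n-s)!\,\delta_{i_{1}\ldots i_{s}}^{j_{1}\ldots j_{s}}$, and the factor $(n-s)!$ cancels the normalization $1/(n-s)!$ exactly, leaving $\delta_{i_{1}\ldots i_{s}}^{j_{1}\ldots j_{s}}\det\mathbf{A}$, which is Eq.~(\ref{corollary 1}). This also makes transparent that the corollary is the top end of Lemma~1: one could alternatively iterate Eq.~(\ref{great}) $s$ times, each step lowering the complement order by one until the order-zero complement $\det\mathbf{A}$ is reached, but that route carries heavier combinatorial accounting, so the direct substitution above is preferable.

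The computation is essentially routine; the only point requiring genuine care is the index bookkeeping of the dummy pairs $(i_{s+1},\ldots,i_{n})$, which begin as summation indices in the definition of the complement but end up as the contracted slots producing the trace factor $(n-s)!$. The main thing I would double-check is that this factor cancels the $1/(n-s)!$ of Eq.~(\ref{complement def}) precisely and that no further combinatorial weight is introduced when the $s$ new matrices are merged with the $n-s$ already present; once that bookkeeping is confirmed, the reduction to $\delta_{i_{1}\ldots i_{s}}^{j_{1}\ldots j_{s}}\det\mathbf{A}$ and the claimed generalization of Eqs.~(\ref{basic eq adj}) and (\ref{great}) follow immediately.
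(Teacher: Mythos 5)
Your proposal is correct and follows essentially the same route as the paper's proof: substitute the definition (\ref{complement def}) to obtain a single order-$n$ generalized Kronecker symbol contracted with $n$ matrices, collapse that product to $\delta\,\det\mathbf{A}$ via Eq.~(\ref{det}), and then use Eq.~(\ref{APP 3}) so that the factor $(n-s)!$ from the contraction of the coincident pairs $i_{s+1},\ldots,i_{n}$ cancels the normalization $1/(n-s)!$. The only difference is cosmetic: the paper inserts the intermediate form $\frac{1}{n!}\delta\,\det\mathbf{A}$ and then contracts two order-$n$ Kronecker symbols to produce a compensating $n!$, whereas you apply the contracted determinant identity in one step.
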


\begin{proof}
Let us start from the equation 
\begin{equation*}
A_{k_{1}}^{j_{1}}\ldots A_{k_{s}}^{j_{s}}(\underset{s}\det \mathbf{A})
_{i_{1} \ldots i_{s}}^{k_{1} \ldots k_{s}} =\frac {1}{(n-s)!}%
\delta_{i_{1}\ldots i_{s}i_{s+1}\ldots i_{n}}^{k_{1}\ldots
k_{s}k_{s+1}\ldots k_{n}}A_{k_{1}}^{j_{1}}\ldots
A_{k_{s}}^{j_{s}}A_{k_{s+1}}^{i_{s+1}}\ldots A_{k_{n}}^{i_{n}}.
\end{equation*}

The left side equals 
\begin{equation*}
(\det \mathbf{A})^{j_{1}\ldots j_{s}}_{k_{1}\ldots k_{s}}  (\underset{s}\det 
\mathbf{A}) _{i_{1} \ldots i_{s}}^{k_{1} \ldots k_{s}}, 
\end{equation*}
where the sum runs over the unordered index sets $(k_{1}, \ldots, k_{s})$.

Equation (\ref{det}) allows to write the right side of Eq. (\ref{corollary 1}) in the form 
\begin{equation*}
\frac{1}{(n-s)!}\delta_{i_{1}\ldots i_{s}i_{s+1}\ldots i_{n}}^{k_{1}\ldots
k_{s}k_{s+1}\ldots k_{n}}\delta_{k_{1}\ldots k_{s}k_{s+1}\ldots
k_{n}}^{j_{1}\ldots j_{s}i_{s+1}\ldots i_{n}}\frac{1}{n!}\det\mathbf{A}. 
\end{equation*}
The sum over the indices $k_{1},$ $\ldots,$ $k_{n}$ and $i_{s+1},$ $\ldots,$ 
$i_{n}$ gives Eq.~(\ref{corollary 1}).
\end{proof}

\begin{remark}
The matrix $\mathbf{A}$ degeneracy condition was not employed to prove
Theorem 2 or Corollary 1. 
%The results are consequently valid for non-degenerate matrices $\mathbf{A}$ also. 
By multiplying both parts of Eq.~(\ref{corollary 1}) by $(\mathbf{A}%
^{-1})^{l_{1}}_{j_{1}} \ldots (\mathbf{A}^{-1})^{l_{s}}_{j_{s}}$ and summing
by repeating indices, one finds (see also \cite{Prasolov:1994})
\begin{equation}
(\underset{s}\det \, \mathbf{A})_{i_{1}\ldots i_{s}}^{j_{1}\ldots j_{s}}=(\det
\mathbf{A}^{-1})_{i_{1}\ldots i_{s}}^{j_{1}\ldots j_{s}}\det \mathbf{A}.
\nonumber % \label{corollary 2}
\end{equation}%
%where $(j_{1}, \ldots, j_{s}) \subseteq \Sigma_{n}$ and $(i_{1}, \ldots, i_{s}) \subseteq \Sigma_{n}$.
\end{remark}

\begin{remark}
For $r=s$, Eq. (\ref{great}) simplifies to
\begin{equation}
A_{j_{s}}^{j}(\underset{s}{\det }\, \mathbf{A})_{i_{1}\mathfrak{\ldots }%
i_{s-1}i}^{j_{1}\mathfrak{\ldots }j_{s-1}j_{s}}=\frac{1}{(s-1)!}\delta
_{i_{1}\ldots i_{s-1}i}^{k_{1}\ldots k_{s-1}j}(\underset{s-1}{\det } \,\mathbf{A%
})_{k_{1}\ldots k_{s-1}}^{j_{1}\ldots j_{s-1}}.  \nonumber % \label{great r=s}
\end{equation}%
\end{remark}

\begin{remark}
For $\mathbf{A}=\mathbf{C}(\lambda )$, $\mathrm{rank}(\mathbf{C}(\lambda )) = n - s$ and 
$\mu _{\mathbf{H}}(\lambda )=s \geq r$, 
\begin{equation}
A_{j_{r}}^{j}(\underset{s}{\det } \, \mathbf{A})_{i_{1}\mathfrak{\ldots }%
i_{r-1}i}^{j_{1}\mathfrak{\ldots }j_{r-1}j_{r}}=0. \label{CY = 0}
\end{equation}
Equation (\ref{great det}) yields this equality because the minor
determinants of order $n-s+1$ vanish. 
The right side of Eq. (\ref{great}) from the other hand is zero because, as previously shown, 
the complements of the minor determinants of order $n-s+1$ vanish.
\end{remark}

%%%%%%%%%%%%%%%%%%%%%%%%%%%%%%%%%%%%%%%%%%%%%%%%%%%%%%%%%%%%%%%%%%%%%%%%%%%%%%%%%%%%%%%%%%%%%%%%%

\subsection{Basic properties of $\psi_{i}^{j} (s,\lambda)$}

The reduced complement tensor contains an excessive number of components. 
The minimum version of $r = 1$ already has all of the information about the eigenvectors.
Using Eqs. (\ref{define wf adj tensor}) and (\ref{CY = 0}) for $\mu_{\mathbf{H}}(\lambda)=s > 1$, we obtain 
\begin{equation}
C_{k}^{j}(\lambda)\psi_{i}^{k}(s,\lambda)=0. \label{4.8}
\end{equation}
Equation (\ref{4.8}) is a fairly obvious equivalent of Eq.~(\ref{basic eq adj}) 
for characteristic matrices, 
from which eigenvectors for non-degenerate systems are derived in Sect. 3.

Contraction of $s - 1$ indices in Eq. (\ref{r ge 2}) yields the reduced order-one complement
\begin{equation*}
\psi_{\mathfrak{b}}^{\mathfrak{a}}(s,\lambda_{\mathfrak{d}})=(s-1)!\delta _{%
\mathfrak{b}}^{\mathfrak{a}}\prod\limits_{\mathfrak{c}\notin\left( \mathfrak{%
b}_{1},\ldots,\mathfrak{b}_{s}\right) }\left( \lambda_{\mathfrak{d}} -\lambda_{\mathfrak{c}%
}\right). 
\end{equation*}
The non-vanishing components are those with $\mathfrak{a} = \mathfrak{b} = \mathfrak{d} \in\left( \mathfrak{b}_{1},\ldots,\mathfrak{b}_{s}\right) 
$. There are $s$ such components in total.

Hermitian product that corresponds
to two distinct sets of eigenvalues $\lambda_{\mathfrak{b}}$
with $\mathfrak{b}\in(\mathfrak{b}_{1},\ldots,\mathfrak{b}_{s})$ and $%
\lambda _{\mathfrak{d}}$ with $\mathfrak{d}\in(\mathfrak{d}_{1},\ldots,%
\mathfrak{d}_{p})$ of algebraic multiplicities $\mu_{\mathbf{A}}(\lambda_{%
\mathfrak{b}})=s$ and $\mu_{\mathbf{A}}(\lambda_{\mathfrak{d}})=p$,
respectively, is equal to%
\begin{align*}
\sum_{\mathfrak{a}}&
\psi_{\mathfrak{g}}^{\mathfrak{a\ast}}(s,\lambda_{\mathfrak{b}})
\psi_{\mathfrak{h}}^{\mathfrak{a}}(p,\lambda_{\mathfrak{d}}) \\
&= (s-1)! (p-1)! \delta_{\mathfrak{h}}^{\mathfrak{g}}
\prod \limits_{\mathfrak{c}\notin\left( \mathfrak{b}_{1},\ldots,\mathfrak{b}_{s}\right) }
\left( \lambda_{\mathfrak{b}}-\lambda _{\mathfrak{c}}\right)
\prod \limits_{\mathfrak{e}\notin\left( \mathfrak{d}_{1},\ldots,\mathfrak{d}_{s}\right) }
\left( \lambda_{\mathfrak{d}}-\lambda_{\mathfrak{e}}\right) . 
\end{align*}
The index sets $(\mathfrak{b}_{1},\ldots,\mathfrak{b}_{s})$ and $(%
\mathfrak{d}_{1},\ldots,\mathfrak{d}_{p})$ do not have elements in common, so that
$\delta_{\mathfrak{h}}^{\mathfrak{g}} = 0$, which
provides the orthogonality. The contravariant components of the reduced 
complements define contravariant components of the eigenvectors, while the
lower lables enumerate these eigenvectors. 
For identical sets, $(\mathfrak{b}_{1},\ldots,\mathfrak{b}%
_{s})$ and $(\mathfrak{d}_{1},\ldots,\mathfrak{d}_{p})$, the orthogonality of eigenvectors is
provided by the Kronecker symbol as well.

In the initial coordinate system, 
\begin{align}
\psi_{i}^{j}(s,\lambda_{\mathfrak{b}}) & =
\sum_{\mathfrak{a},\mathfrak{g}\in\left( \mathfrak{b}_{1},\ldots,\mathfrak{b}_{s}\right) }
%\sum_{\mathfrak{g}\in\left( \mathfrak{b}_{1},\ldots,\mathfrak{b}_{s}\right) }
U_{\mathfrak{a}}^{j} \psi_{\mathfrak{g}}^{\mathfrak{a}}(s,\lambda_{\mathfrak{b}})
U_{i}^{\mathfrak{g}}  \notag \\
& =(s-1)!\sum_{\mathfrak{a}\in\left( \mathfrak{b}_{1},\ldots,\mathfrak{b}%
_{s}\right) }U_{\mathfrak{a}}^{j} U_{i}^{%
\mathfrak{a}}\prod\limits_{\mathfrak{c}\notin\left( \mathfrak{b}_{1},\ldots,%
\mathfrak{b}_{s}\right) }\left( \lambda_{\mathfrak{b}}-\lambda_{\mathfrak{c}%
}\right) .  \label{psi 3}
\end{align}
Suppose two different
sets $(\mathfrak{b}_{1},\ldots,\mathfrak{b}_{s})$ and $(\mathfrak{%
d}_{1},\ldots,\mathfrak{d}_{p})$ correspond to $s$- and $p$-fold
degenerate eigenvalues. The Hermitian product can be found to be
\begin{align}
\sum_{j}&\psi_{i}^{j\ast}(s,\lambda_{\mathfrak{b}})
\psi_{k}^{j}(p,\lambda_{\mathfrak{d}})  \nonumber \\ %\label{psi 4} \\
& =
\sum_{\mathfrak{a}\in\left( \mathfrak{b}_{1},\ldots,\mathfrak{b}_{s}\right)}
\sum_{\mathfrak{g}\in\left( \mathfrak{d}_{1},\ldots,\mathfrak{d}_{p}\right)}
\delta_{\mathfrak{g}}^{\mathfrak{a}}
U_{\mathfrak{a}}^{i} U_{k}^{\mathfrak{g}}
\prod\limits_{\mathfrak{c}\notin\left( \mathfrak{b}_{1},\ldots,%
\mathfrak{b}_{s}\right) }\left( \lambda_{\mathfrak{b}}-\lambda_{\mathfrak{c}%
}\right) \prod\limits_{\mathfrak{e}\notin\left( \mathfrak{d}_{1},\ldots,%
\mathfrak{d}_{p}\right) }\left( \lambda _{\mathfrak{d}}-\lambda_{\mathfrak{e}%
}\right) .  \notag
\end{align}
The reduced complements, considered as the eigenvectors with
regard to their contravariant components, appear to be orthogonal because%
\emph{\ }$\mathfrak{a}\neq\mathfrak{g}$\emph{\ }by the assertion\emph{\ }$(%
\mathfrak{b}_{1},\ldots,\mathfrak{b}_{s})\cap(\mathfrak{d}_{1},\ldots,%
\mathfrak{d}_{p})=\varnothing$.

The eigenvectors representing a single set of eigenvalues with $(\mathfrak{b}%
_{1},\ldots,\mathfrak{b}_{s})=(\mathfrak{d}_{1},\ldots,\mathfrak{d}_{p})$
are not orthogonal. Equation (\ref{psi 3}) shows that $n$ eigenvectors
numbered by the covariant indices form a superposition
of $s$ eigenvectors proportional to the matrix elements $U_{\mathfrak{a}}^{j}$. 

%%%%%%%%%%%%%%%%%%%%%%%%%%%%%%%%%%%%%%%%%%%%%%%%%%%%%%%%%%%%%%
We have thus proven
\begin{theorem}
Let $\mathbf{H}$ be an $n \times n$ Hermitian matrix and let $\lambda_{\mathfrak{d}}$ be one of its eigenvalues of algebraic multiplicity $\mu_{\mathbf{H}}(\lambda_{\mathfrak{d}}) = s > 1$, related  to 
an index set $(\mathfrak{b}_1,\ldots,\mathfrak{b}_s) \in \Sigma_{n}$. 
The reduced complement $\psi_{\mathfrak{b}}^{\mathfrak{a}}(s,\lambda_{\mathfrak{d}})$ of the minor determinants of the characteristic matrix 
$\mathbf{C}^{\prime}(\lambda_{\mathfrak{d}})$
has non-vanishing components if and only if 
$\mathfrak{a} = \mathfrak{b} \in (\mathfrak{b}_1,\ldots,\mathfrak{b}_s)$ and $\mathfrak{d} \in (\mathfrak{b}_1,\ldots,\mathfrak{b}_s)$. 
The contravariant components of the reduced complement are the contravariant components of the non-vanishing eigenvectors of matrix $\mathbf{H}^{\prime}$, enumerated by the index $\mathfrak{b}$. The initial coordinate system associates the reduced complement $\psi_{i}^{j}(s,\lambda _{\mathfrak{d}})$ 
with the $n$ vectors labeled by the index $i \in \Sigma_{n}$. Only $s$ of them are linearly independent. These linearly independent vectors are the eigenvectors of matrix $\mathbf{H}$. The eigenvectors for the entire set of eigenvalues are orthogonal and can be normalized to unity.
\end{theorem}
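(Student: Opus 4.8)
The plan is to derive everything from the explicit diagonal-coordinate formula Eq.~(\ref{r ge 2}) for the order-$s$ complement and then transport the conclusions to the initial frame. First I would specialize to the reduced order-one complement by contracting $s-1$ index pairs in Eq.~(\ref{r ge 2}), as in Eq.~(\ref{complement def trace}) with $r=1$. The generalized Kronecker symbol forces the free indices to obey $\mathfrak{a}=\mathfrak{b}$ and constrains the summed indices to complete a set of $s$ pairwise distinct labels. The decisive point is that the product $\prod_{\mathfrak{c}\notin(\mathfrak{b}_1,\ldots,\mathfrak{b}_s)}(\lambda_{\mathfrak{d}}-\lambda_{\mathfrak{c}})$ survives only when those $s$ lower indices coincide with the unique set $(\mathfrak{b}_1,\ldots,\mathfrak{b}_s)$ carrying the $s$-fold degenerate eigenvalue $\lambda_{\mathfrak{d}}$: any excluded index $\mathfrak{c}$ with $\lambda_{\mathfrak{c}}=\lambda_{\mathfrak{d}}$ would annihilate the product, and since exactly $s$ indices share that eigenvalue while only $s$ lower slots are available, all of them must be occupied. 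This pins the non-vanishing entries to $\mathfrak{a}=\mathfrak{b}\in(\mathfrak{b}_1,\ldots,\mathfrak{b}_s)$, with $\mathfrak{d}\in(\mathfrak{b}_1,\ldots,\mathfrak{b}_s)$ holding by hypothesis; counting the orderings of the remaining indices supplies the combinatorial prefactor. Combined with Theorem~2, which already records the vanishing of the lower-order complements, this proves the ``if and only if'' statement in the diagonal frame and exhibits exactly $s$ surviving diagonal components.

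I would then identify these components with eigenvectors. In the diagonal frame each surviving column is proportional to a standard basis vector $e_{\mathfrak{a}}$ with $\mathfrak{a}\in(\mathfrak{b}_1,\ldots,\mathfrak{b}_s)$, which is visibly an eigenvector of $\mathbf{H}^{\prime}$ for $\lambda_{\mathfrak{d}}$; more invariantly, Eq.~(\ref{4.8}), $C_k^j(\lambda)\psi_i^k(s,\lambda)=0$, shows that every column of the reduced complement lies in $\ker\mathbf{C}(\lambda_{\mathfrak{d}})$ and is therefore an eigenvector. This yields the assertion about contravariant components.

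To pass to the initial coordinate system I would apply the reciprocal unitary transformation, obtaining Eq.~(\ref{psi 3}), which presents $\psi_i^j(s,\lambda_{\mathfrak{b}})$, for each fixed $i$, as a linear combination $\sum_{\mathfrak{a}\in(\mathfrak{b}_1,\ldots,\mathfrak{b}_s)} U_i^{\mathfrak{a}}\,U_{\mathfrak{a}}^j$ of the $s$ eigenvectors $U_{\mathfrak{a}}^j$ up to a common nonzero scalar. Hence all $n$ vectors labeled by $i$ lie in the $s$-dimensional eigenspace, so at most $s$ are linearly independent; since the coefficient array $U_i^{\mathfrak{a}}$ inherits full column rank $s$ from the unitarity of $\mathbf{U}$, exactly $s$ are independent and they span the eigenspace, hence are eigenvectors of $\mathbf{H}$. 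This establishes the counting and spanning claims.

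Finally, for orthogonality I would evaluate the Hermitian product of two reduced order-one complements attached to eigenvalues with index sets $(\mathfrak{b}_1,\ldots,\mathfrak{b}_s)$ and $(\mathfrak{d}_1,\ldots,\mathfrak{d}_p)$; the Kronecker symbol $\delta_{\mathfrak{g}}^{\mathfrak{a}}$ forces $\mathfrak{a}=\mathfrak{g}$, which is impossible across distinct eigenvalues since their index sets are disjoint, so vectors from different eigenspaces are orthogonal. Within one eigenspace the $s$ vectors need not be mutually orthogonal, but being a spanning set they may be orthonormalized, giving an orthonormal eigenvector system overall. I expect the only genuine obstacle to be the combinatorial bookkeeping in the first step --- correctly isolating the surviving index set, extracting the prefactor, and verifying the vanishing of all remaining components through the Appendix~A contraction identities together with Theorem~2; the subsequent steps are routine linear algebra.
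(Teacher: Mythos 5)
Your proposal is correct and follows essentially the same route as the paper: contracting $s-1$ index pairs in Eq.~(\ref{r ge 2}) to isolate the $s$ surviving diagonal components, invoking Eq.~(\ref{4.8}) to identify them as eigenvectors, passing to the initial frame via Eq.~(\ref{psi 3}), and establishing orthogonality across distinct eigenvalue sets through the Hermitian product. Your explicit appeal to the full column rank of $U_i^{\mathfrak{a}}$ to pin down exactly $s$ linearly independent vectors is a slightly more careful version of the paper's assertion, but it is not a different argument.
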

%%%%%%%%%%%%%%%%%%%%%%%%%%%%%%%%%%%%%%%%%%%%%%%%%%%%%%%%%%%%%%    

The reduced complements produce normalized eigenvectors that form columns of 
a unitary matrix $\mathbf{V}$, which differs from the transformation matrix $\mathbf{U}$ by a 
multiplication on the diagonal matrix of phase factors: 
$\mathbf{V} = \mathbf{U} \Phi$, where $\Phi = \mathrm{diag} (e^{i\phi_1},\ldots,e^{i\phi_n})$. 
This observation can be used to construct a neutrino mixing matrix based on the neutrino mass matrix.
An alternate approach to the same problem can be found in \cite{Krivoruchenko:2024}. 

\subsection{Trace identity for $\psi_{i}^{j} (s,\lambda)$}

The operation of determining the trace of a matrix product is explicitly covariant with respect to the 
coordinate system transformations. Minor determinants are also covariant objects, but unlike trace o
operations, their explicit covariance gets lost during component calculations, e.g., by the Gauss method. 
By employing analytical methods, the trace procedure offers particular advantages. Furthermore, matrices 
with multicomponent indexes may be manipulated more easily. For instance, consider the Green function of 
quarks, which has both a bispinor and a color index. The summation of each of the indices is carried out
independently, allowing for advancements in the structural analysis of the equations under consideration.

%The operation of calculating the trace of the product of matrices is explicitly covariant 
%in terms of coordinate system transformation. The minor determinants are similarly covariant objects, 
%but unlike the trace operation, explicit covariance gets lost throughout the calculation process. 
%The trace operation thus offers certain advantages when employing analytical methods. 
%Furthermore, manipulating matrices with multi-component indices offers additional 
%advantages. For example, 
%the Green function of quarks has a bispinor index as well as a color index. When calculating 
%the trace, the summing of these indices is performed independently, allowing to understand deeper 
%the analytical structure of the equations. 

An explicit expression for the reduced order-one complements in terms of the powers and the trace of powers of matrix $\mathbf{A}$ is provided by

\begin{theorem}
The reduced order-one complement of the minor determinants of order $n-s$ of an $n \times n$ matrix $\mathbf{A}$ 
is given, in terms of the powers and the trace of powers of the matrix $\mathbf{A}$, by equation
\begin{equation}
(\underset{s}\det \mathbf{A})^{j}_{i}
= \sum_{r=0}^{n-s} ( -1) ^{n-s} \left( \mathbf{A}^{r}\right)_{i}^{j}
\sum_{k_{1},\ldots,k_{n-s-r}} 
\prod\limits_{l=1}^{n-s-r}\frac{\left( -1\right) ^{k_{l}}}
{k_{l}! l^{k_{l}}}\mathrm{Tr}[\mathbf{A}^{l}]^{k_{l}},  \label{main eq}
\end{equation}
where $k_{l}$ are non-negative integers satisfying the constraint%
\begin{equation}
r+\sum_{l=1}^{n-s-r}lk_{l}=n-s.  \label{main eq constraint}
\end{equation}
\end{theorem}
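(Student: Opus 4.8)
The plan is to collapse the double summation on the right-hand side of (\ref{main eq}) into a single sum of the form $\sum_{r=0}^{n-s}(-1)^{r}e_{n-s-r}(\mathbf{A}^{r})_{i}^{j}$, where $e_{m}$ denotes the $m$-th elementary symmetric polynomial of the eigenvalues of $\mathbf{A}$ (equivalently, the diagonal sum of the order-$m$ minor determinants), and then to identify that sum with the reduced order-one complement by a spectral calculation. Throughout I write $p_{l}=\mathrm{Tr}[\mathbf{A}^{l}]$ for the power sums.

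First I would dispose of the inner partition sum. For fixed $r$, the sum over $k_{1},\ldots,k_{n-s-r}$ subject to the constraint (\ref{main eq constraint}), i.e. $\sum_{l}lk_{l}=n-s-r$, is exactly the coefficient of $t^{n-s-r}$ in $\exp\!\big(-\sum_{l\geq 1}\tfrac{t^{l}}{l}p_{l}\big)$. Since $\exp\!\big(\mathrm{Tr}\log(\mathbf{I}-t\mathbf{A})\big)=\det(\mathbf{I}-t\mathbf{A})=\prod_{k}(1-t\lambda_{k})=\sum_{m}(-1)^{m}e_{m}t^{m}$, this coefficient equals $(-1)^{n-s-r}e_{n-s-r}$; this is the Faddeev--LeVerrier/Newton identity implicit in the references cited in the Introduction. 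Substituting and using $(-1)^{n-s}(-1)^{n-s-r}=(-1)^{r}$, the right-hand side of (\ref{main eq}) becomes $\sum_{r=0}^{n-s}(-1)^{r}e_{n-s-r}(\mathbf{A}^{r})_{i}^{j}$, the sought single sum.

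It then remains to prove the tensor identity $(\underset{s}{\det}\mathbf{A})_{i}^{j}=\sum_{r=0}^{n-s}(-1)^{r}e_{n-s-r}(\mathbf{A}^{r})_{i}^{j}$, which is the generalization of the $s=1$ adjugate trace identity (\ref{TI}). Both sides are polynomial in the entries of $\mathbf{A}$ and transform as $(1,1)$-tensors, so it suffices to verify the identity on the dense set of diagonalizable matrices, working in the eigenbasis where $A_{j}^{i}=\lambda_{i}\delta_{j}^{i}$. On the left, Definition 4 together with the contraction rules for the generalized Kronecker symbol (Appendix A, e.g. (\ref{APP 3})) give $(\underset{s}{\det}\mathbf{A})_{i}^{j}=\delta_{i}^{j}\,e_{n-s}(\lambda_{1},\ldots,\widehat{\lambda_{i}},\ldots,\lambda_{n})$, the elementary symmetric polynomial of degree $n-s$ in the eigenvalues with $\lambda_{i}$ omitted; on the right, $(\mathbf{A}^{r})_{i}^{j}=\delta_{i}^{j}\lambda_{i}^{r}$ yields $\delta_{i}^{j}\sum_{r}(-1)^{r}e_{n-s-r}\lambda_{i}^{r}$. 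These scalars coincide because $\prod_{k\neq i}(1-t\lambda_{k})=\prod_{k}(1-t\lambda_{k})\big/(1-t\lambda_{i})$; reading off the coefficient of $t^{n-s}$ gives $e_{n-s}(\widehat{\lambda_{i}})=\sum_{r=0}^{n-s}(-\lambda_{i})^{r}e_{n-s-r}$, exactly the required relation. Continuity then extends the identity to all $\mathbf{A}$.

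The main obstacle is the left-hand contraction in the eigenbasis: one must show that after setting $A_{j}^{i}=\lambda_{i}\delta_{j}^{i}$ the symbol $\delta^{j\,j_{s+1}\ldots j_{n}}_{i\,i_{s+1}\ldots i_{n}}$ forces $j=i$ and reduces, against the combinatorial factor $1/(n-s)!$, to the unordered sum of products of $n-s$ distinct eigenvalues drawn from $\{\lambda_{k}\}_{k\neq i}$. Managing this factor and the antisymmetry is the only delicate point; everything else is the standard symmetric-function manipulation above. An alternative that avoids diagonalization is to note that the reduced order-one complements are the matrix coefficients of the adjugate of the characteristic matrix, $\mathrm{adj}(\lambda\mathbf{I}-\mathbf{A})=\sum_{s=1}^{n}(-1)^{n-s}\lambda^{s-1}(\underset{s}{\det}\mathbf{A})_{i}^{j}$, and to feed this into the Faddeev--LeVerrier recursion for those coefficients; this route is purely algebraic but requires establishing the adjugate expansion first.
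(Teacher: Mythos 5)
Your proof is correct, but it takes a genuinely different route from the paper's. The paper proves Eq.~(\ref{main eq}) in one stroke by a generating-function computation: it substitutes $\mathbf{D}=\mathbf{I}+z\mathbf{A}$ into the adjugate identity (\ref{lhside}), extracts the coefficient of $z^{n-s}$ by a contour integral around $z=0$, and observes that the binomial expansion of the antisymmetrized product of the $\mathbf{D}$'s yields exactly $(\underset{s}{\det}\,\mathbf{A})_{i}^{j}$ on one side, while $\exp\left(\mathrm{Tr}\ln(\mathbf{I}+z\mathbf{A})\right)(\mathbf{I}+z\mathbf{A})^{-1}$ yields the trace polynomial with the constraint (\ref{main eq constraint}) on the other; no spectral data enter and the identity holds for arbitrary $\mathbf{A}$ directly. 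You instead first resum the inner partition sum into $(-1)^{n-s-r}e_{n-s-r}$ via $\det(\mathbf{I}-t\mathbf{A})=\exp\left(\mathrm{Tr}\ln(\mathbf{I}-t\mathbf{A})\right)$ --- the same generating function as the paper's Eq.~(\ref{expo1}), but deployed to eliminate the traces rather than to produce them --- thereby reducing the theorem to the cleaner identity $(\underset{s}{\det}\,\mathbf{A})_{i}^{j}=\sum_{r=0}^{n-s}(-1)^{r}e_{n-s-r}(\mathbf{A}^{r})_{i}^{j}$, which you verify on the dense set of diagonalizable matrices and extend by polynomiality and continuity. Your signs check out ($(-1)^{n-s}(-1)^{n-s-r}=(-1)^{r}$), and the intermediate identity is consistent with the paper's table of low-order cases, so it is the natural generalization of (\ref{TI}). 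The price of your route is the density argument plus the eigenbasis evaluation $(\underset{s}{\det}\,\mathbf{A})_{i}^{j}=\delta_{i}^{j}\,e_{n-s}(\lambda_{1},\ldots,\widehat{\lambda_{i}},\ldots,\lambda_{n})$, which you rightly flag as the only delicate contraction; it is not a genuine gap, since it is precisely the $r=1$ reduction of the diagonal-frame computation the paper already carries out in Eq.~(\ref{r ge 2}) (the generalized Kronecker symbol forces $j=i$ and the factor $1/(n-s)!$ converts ordered tuples of distinct indices into unordered subsets). In short: the paper's argument is basis-free and self-contained in the Kronecker-symbol formalism, while yours isolates the symmetric-function content of the theorem and makes the elementary-symmetric-polynomial structure of the coefficients explicit.
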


\begin{proof}
We write Eq. (\ref{basic eq adj}) in the form
\begin{equation}
\det \mathbf{A} \left( \mathbf{A}^{-1}\right) _{i}^{j}=%
\frac{1}{(n-1)!}\delta
_{ii_{2}\ldots i_{n}}^{jj_{2}\ldots j_{n}} A_{j_{2}}^{i_{2}}\ldots A_{j_{n}}^{i_{n}}  \label{lhside}
\end{equation}
and replace $\mathbf{A}\rightarrow\mathbf{D}=\mathbf{I}+z\mathbf{A}$, where $%
\mathbf{I}$ is the unit matrix, and $z$ is a complex number. The right
side of the equation, being integrated over $z$ counterclockwise around $z=0,$ gives
\begin{align*}
&\frac{1}{2\pi i}\oint \frac{dz}{z^{n-s+1}}\frac{1}{(n-1)!}\delta
_{ii_{2}\ldots i_{n}}^{jj_{2}\ldots j_{n}}D_{j_{2}}^{i_{2}}\ldots
D_{j_{n}}^{i_{n}} \\
&~~~~~~=\frac{1}{2\pi i}\oint \frac{dz}{z^{n-s+1}}\frac{1}{(n-1)!}\delta
_{ii_{2}\ldots i_{n}}^{jj_{2}\ldots j_{n}} \\
&~~~~~~\times \sum_{r=0}^{n-1}z^{n-1-r}\frac{%
(n-1)!}{r!(n-1-r)!} \delta _{j_{2}}^{i_{2}}\ldots \delta
_{j_{r+1}}^{i_{r+1}}A_{j_{r+2}}^{i_{r+2}}\ldots A_{j_{n}}^{i_{n}}.
\end{align*}%
By performing summation over the repeated indices and performing the integration over $z$, we obtain
\begin{align*}
&\frac{1}{2\pi i}\oint \frac{dz}{z^{n-s+1}}\sum_{r=0}^{n-1}z^{n-1-r}\frac{1%
}{(n-1-r)!}\delta _{ii_{r+2}\ldots i_{n}}^{jj_{r+2}\ldots
j_{n}}A_{j_{r+2}}^{i_{r+2}}\ldots A_{j_{n}}^{i_{n}} \\
&~~~~~~=\frac{1}{2\pi i}\oint \frac{dz}{z^{n-s+1}}\sum_{r=0}^{n-1}z^{n-1-r}(%
\underset{r+1}{\det }\mathbf{A)}_{i}^{j}=(\underset{s}{\det }\mathbf{A)}%
_{i}^{j}.
\end{align*}

The left side of Eq. (\ref{lhside}) can be transformed as
follows:%
\begin{equation}
(\det \mathbf{D} ) \mathbf{D}^{-1}=\exp\left( \mathrm{Tr}%
\ln\left( \mathbf{I}+z\mathbf{A}\right) \right) \left( \mathbf{I}+z\mathbf{A}%
\right) ^{-1}.  \nonumber %\label{det = exp ln}
\end{equation}
The power series expansion around $z=0$ gives 
\begin{eqnarray}
\exp\left( \mathrm{Tr}\ln\left( \mathbf{I}+z\mathbf{A}\right) \right) 
&=& \prod \limits_{l=1}^{\infty}\sum_{k_{1},\ldots,k_{\infty}}\frac{%
(-1)^{(l+1)k_{l}}}{l^{k_{l}}k_{l}!}z^{lk_{l}}\mathrm{Tr[}\mathbf{A}^{l}%
\mathrm{]}^{k_{l}},  \label{expo1} \\
\left( \mathbf{I}+z\mathbf{A}\right) ^{-1} 
&=& \sum_{r=0}^{\infty}(-1)^{r}z^{r}\mathbf{A}^{r}  \label{expo2}
\end{eqnarray}
where $0\leq k_{l}<+\infty$. Substituting these expressions for the contour
integral in Eq. (\ref{lhside}), we reproduce the representation (\ref{main
eq}) with the parameters satisfying the constraint (\ref{main eq constraint}%
). These constraints imply, in particular, $r\leq n-s$, $l\leq n-s-r$, and $%
k_{l}=0$ for $l>n-s-r$. The summation runs over all non-negative solutions $%
k_{1}$, \ldots, $k_{n-s-r}$ of Eq. (\ref{main eq constraint}).
\end{proof} 

    The expansion 
\begin{equation}
    \exp \left( \sum_{l=1}^{\infty}x_{l}\frac{z^l}{l!} \right ) = \sum_{n=0}^{\infty}\frac{z^n}{n!}
    B_{n}(x_{1},\ldots,x_{n}) \nonumber %
\end{equation}
determines the complete exponential Bell polynomials $B_{n}(x_{1},\ldots,x_{n})$ \cite{Bell:1934}. 
By comparing the decomposition coefficients at the same powers of $z$,
one gets 
\begin{equation}
    \frac{1}{n!}B_{n}(x_{1},\ldots,x_{n}) = \prod_{l=1}^{n}\left( \sum_{k_{l} = 0}^{n} \frac{x_{l}^{k_{l}}}{k_{l}!l!^{k_{l}}}\right) \nonumber %
\end{equation}
under the condition of
\begin{equation}
\sum_{l=1}^{n}lk_{l} = n.   \nonumber % \label{main eq constraint def}
\end{equation}

\begin{corollary}
Using the above formulas, Eq. (\ref{main eq}) can be written in the form
\begin{equation}
(\underset{s}\det\,  \mathbf{A})^{j}_{i} = 
\sum_{r=0}^{n-s} \frac{(- 1) ^{n-s}}{(n - s - r)!} \left( \mathbf{A}^{r}\right)_{i}^{j}  B_{n - s - r}(x_{1},\ldots,x_{n - s - r}), 
\label{main eq plus}
\end{equation}
where $x_{l} = - (l-1)!\mathrm{Tr}[\mathbf{A}^{l}]$.

\end{corollary}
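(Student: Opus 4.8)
The plan is to show that Corollary 2 is merely a rewriting of Theorem~5: for each fixed $r$, the inner multi-index sum in Eq.~(\ref{main eq}) is precisely a scaled complete Bell polynomial under the stated substitution $x_l = -(l-1)!\,\mathrm{Tr}[\mathbf{A}^l]$. First I would fix $r$ in the outer sum of Eq.~(\ref{main eq}) and abbreviate $m = n-s-r$, so that the constraint (\ref{main eq constraint}) reduces to $\sum_{l=1}^{m} l k_l = m$ and the inner sum becomes $\sum_{k_1,\ldots,k_m}\prod_{l=1}^{m}\frac{(-1)^{k_l}}{k_l!\,l^{k_l}}\mathrm{Tr}[\mathbf{A}^l]^{k_l}$, summed over all non-negative tuples satisfying that constraint.

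Next I would compare this expression with the Bell-polynomial identity stated just above the corollary, namely $\frac{1}{m!}B_m(x_1,\ldots,x_m)=\sum_{k_1,\ldots,k_m}\prod_{l=1}^{m}\frac{x_l^{k_l}}{k_l!\,(l!)^{k_l}}$ under the same constraint $\sum_{l=1}^{m} l k_l = m$. Inserting $x_l = -(l-1)!\,\mathrm{Tr}[\mathbf{A}^l]$ gives $x_l^{k_l} = (-1)^{k_l}\big((l-1)!\big)^{k_l}\mathrm{Tr}[\mathbf{A}^l]^{k_l}$, so the $l$-th factor reads $\frac{(-1)^{k_l}((l-1)!)^{k_l}\mathrm{Tr}[\mathbf{A}^l]^{k_l}}{k_l!\,(l!)^{k_l}}$. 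The one nontrivial point is the factorial bookkeeping $(l!)^{k_l} = l^{k_l}\big((l-1)!\big)^{k_l}$, which cancels the $((l-1)!)^{k_l}$ in the numerator and reproduces exactly the factor $\frac{(-1)^{k_l}}{k_l!\,l^{k_l}}\mathrm{Tr}[\mathbf{A}^l]^{k_l}$ of Eq.~(\ref{main eq}). This establishes the term-by-term identification of the inner sum with $\frac{1}{m!}B_m(x_1,\ldots,x_m)$.

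Finally I would substitute this identification back into Eq.~(\ref{main eq}), restore $m = n-s-r$, and factor out $(-1)^{n-s}(\mathbf{A}^r)_i^j$ to read off Eq.~(\ref{main eq plus}) with prefactor $(-1)^{n-s}/(n-s-r)!$. The main (and essentially only) obstacle is confirming that the constraint ranges coincide: the condition $\sum_{l=1}^{m} l k_l = m$ forces $k_l = 0$ for every $l > m$, so $B_m$ genuinely depends only on $x_1,\ldots,x_m$, which matches the truncation $k_l = 0$ for $l > n-s-r$ already recorded in the proof of Theorem~5. No analytic input beyond the generating-function definition of the complete Bell polynomials is required, and the corollary follows by this rewriting alone.
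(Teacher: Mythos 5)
Your proposal is correct and follows exactly the route the paper intends: the paper offers no explicit proof of this corollary beyond the phrase ``using the above formulas,'' and the intended argument is precisely your term-by-term identification of the inner constrained sum in Eq.~(\ref{main eq}) with $\tfrac{1}{(n-s-r)!}B_{n-s-r}(x_1,\ldots,x_{n-s-r})$ via the substitution $x_l=-(l-1)!\,\mathrm{Tr}[\mathbf{A}^l]$ and the cancellation $(l!)^{k_l}=l^{k_l}((l-1)!)^{k_l}$. Nothing further is needed.
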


\begin{example}
Particular cases of Eqs. (\ref{main eq}) and (\ref{main eq plus})
\end{example}

Using the Maple symbolic computation
software package \cite{Maple} we have verified Eqs. (\ref{main eq}) and (\ref{main eq plus}) for $n = 3$ with $s=1,2$ and $n = 4$ with $s=1,2,3$. 

The right side of Eq. (\ref{main eq}) for $s = 1$ reproduces the adjugate
matrix in terms of the trace expansion \cite{Kondratyuk:1992,Krivoruchenko:2016}. 
Moreover, the expansion depends only on the difference $n - s$, so 
the adjugate matrix converts to the expressions (\ref{main eq}) and (\ref{main eq plus}) after replacing $n - 1 \to n - s$.
For simple cases, we obtain
\begin{center}
$%
\begin{array}{cl}
\hline
\hline
n-s & ~~~~~~~~~~~~~~~~~~~~~~~~(\underset{s}\det \mathbf{A})^{j}_{i} \\ 
\hline
1 & \mathrm{Tr}[\mathbf{A}]-\mathbf{A} \\ 
2 & \frac{1}{2}\left( \mathrm{Tr}[\mathbf{A}]^{2}-\mathrm{Tr}[\mathbf{A}%
^{2}]\right) -\mathbf{A}\mathrm{Tr}[\mathbf{A}]+\mathbf{A}^{2} \\ 
3 & \frac{1}{6}\left( \mathrm{Tr}[\mathbf{A}]^{3}-3\mathrm{Tr}[\mathbf{A}]%
\mathrm{Tr}[\mathbf{A}^{2}]+2\mathrm{Tr}[\mathbf{A}^{3}]\right) \\
& ~~~~~~~~-\frac{1}{2}%
\mathbf{A}\left( \mathrm{Tr}[\mathbf{A}]^{2}-\mathrm{Tr}[\mathbf{A}%
^{2}]\right) +\mathbf{A}^{2}\mathrm{Tr}[\mathbf{A}]-\mathbf{A}^{3} \\
\hline
\hline
\end{array}
$
\end{center}

\begin{example}
Dirac equation
\end{example}

It is informative to demonstrate this formalism with the free Dirac equation, which has a twofold 
degeneration of energy levels corresponding to distinct spin states of an electron \cite{Bjorken:1964}. The 
$\gamma$-matrices have a size of $n=4$, and the algebraic multiplicity of levels equals $s=2$. 
In the standard representation, the Hamiltonian is represented as 
$\mathbf{H} = \mbox{\boldmath$\alpha$} \mathbf{p} + \beta m$, 
with $\mbox{\boldmath$\alpha$} = \gamma_{0}\mbox{\boldmath$\gamma$}$ and $\beta=\gamma_{0}$. 
The eigenvalues are 
$\lambda_{1} =\lambda_{2}=-\lambda_{3}=-\lambda_{4} = E_{\mathbf{p}} \equiv \sqrt{\mathbf{p}^{2}+m^{2}}$. 
The characteristic matrix equals $\mathbf{C}(\lambda) = \lambda - \mathbf{H}$, 
$\mathrm{rank}(\mathbf{C}(\lambda_{\mathfrak{a}})) = 2$ for $\mathfrak{a} = 1,\ldots,4$. 
We need the following values: 
\begin{align*}
\mathbf{C}(\lambda _{\mathfrak{a}})& =\gamma _{0}\left( \hat{p}-m\right) , \\
\mathbf{C}^{l}(\lambda _{\mathfrak{a}})& =2^{l-1}\lambda _{\mathfrak{a}%
}^{l-1}\mathbf{C}(\lambda _{\mathfrak{a}}), \\
\mathrm{Tr}[\mathbf{C}^{l}(\lambda _{\mathfrak{a}})]& =2^{l+1}\lambda _{%
\mathfrak{a}}^{l},
\end{align*}%
where $\hat{p}=p_{\mu }\gamma ^{\mu }$ with $p_{0}=\lambda _{\mathfrak{a}}$
and $l=1,2,\ldots $. Equations (\ref{main eq}) and (\ref{main eq plus}) yield

\begin{equation*}
\psi _{i}^{j}(2,\lambda _{\mathfrak{a}})=2\lambda _{\mathfrak{a}}\left(
\left( \hat{p}+m\right) \gamma _{0}\right) _{i}^{j}.
\end{equation*}%
The fundamental requirement (\ref{4.8}) %\begin{equation*}
%C_{k}^{j}(\lambda _{\mathfrak{a}})\psi _{i}^{k}(2,\lambda _{\mathfrak{a}})=0
%\end{equation*}%
is successfully fulfilled.

To validate Eqs. (\ref{4.4.2}), we find additionally
\begin{eqnarray*}
\det \mathbf{C}(\lambda ) &=&(\lambda ^{2}-E_{\mathbf{p}}^{2})^{2}, \\
\mathbf{C}^{-1}(\lambda ) &=&\frac{\hat{p}+m}{\lambda ^{2}-E_{\mathbf{p}}^{2}%
}\gamma _{0},
\end{eqnarray*}%
where $p_{0}=\lambda $. It is seen that $\det \mathbf{C}(\lambda _{\mathfrak{a}})=0$ and 
\begin{equation*}
\psi _{i}^{j}(1,\lambda _{\mathfrak{a}})=(\mathrm{adj}~\mathbf{C}(\lambda _{%
\mathfrak{a}}))_{i}^{j}=\lim_{\lambda \rightarrow \lambda _{\mathfrak{a}}}(%
\mathbf{C}^{-1}(\lambda )\det \mathbf{C}(\lambda ))_{i}^{j}=0.
\end{equation*}

Equations (\ref{4.4.2}) can also be verified with trace identities for the determinant and adjugate matrix.
%Eqs. (\ref{main eq}) and (\ref{main eq plus}). 

The subscript $i$ counts the eigenvectors. This index can be
contracted with any matrix. In this situation, one may cancel the
multiplication by $\gamma _{0}$ to get the conventional projection operator
for states with energy $p_{0}$ and momentum $\mathbf{p}$: 
\begin{equation*}
\Pi_{\mathbf{p}}=\frac{\hat{p}+m}{2m}.
\end{equation*}%
The projection operator can act on any bispinors to build the eigenvectors. For instance, bispinors of
electrons, polarized along the $z$-axis in the rest frame and boosted in the
direction of the momentum $\mathbf{p}$, have the standard form: 
$u(\mathbf{p},s) = N_{\mathbf{p}}\Pi_{\mathbf{p}} u(\mathbf{p=0},e_z)$, where 
$u^{\mathrm{T}}(\mathbf{p=0}, e_z) $ $ =(1,0,0,0)$ for $\mathfrak{a}=1$ and 
$u^{\mathrm{T}}(\mathbf{p=0}, $ $ -e_z) $ $ =(0,1,0,0)$ for $\mathfrak{a}=2$. 
Here, $e_{z} = (0,0,0,1)$ is the polarization four-vector in the rest frame 
and $s_{\mu}$ is the polarization four-vector in the boosted frame. 
For the negative energy solutions, bispinors take the form 
$v(\mathbf{p},s) $ $  = N_{\mathbf{p}}\Pi_{\mathbf{p}}v(\mathbf{p=0},e_z)$, where 
$v^{\mathrm{T}}(\mathbf{p=0},e_z) $ $ =(0,0,0,1)$ for $\mathfrak{a}=3$ and 
$v^{\mathrm{T}}(\mathbf{p=0},-e_z) $ $ =(0,0,1,0)$ for $\mathfrak{a}=4$.
The normalization factors can be found from the equations $\bar{u}(p,s)u(p,s)= - \bar{v}(p,s)v(p,s) = 1$
to give $N_{\mathbf{p}}=\sqrt{\frac{m}{E_{\mathbf{p}}+m}}$.

Charge conjugation, which is an issue of physical interpretation, is normally
applied to the negative energy solutions ($\mathfrak{a}=3,4$). In
such cases, this requires substituting $s_{\mu} \rightarrow -s_{\mu}$ and $p_{\mu} \rightarrow -p_{\mu}$
in the above formulas to identify the eigenvectors $v(p,s)$ with the positively charged positrons.

The choosing of bispinors in the rest system only affects the phase of the eigenvectors built with the use of projection operators on states with certain four-momentum and spin polarization \cite{Krivoruchenko:1994}. Spin degeneracy opens the door to the possibility of mixing the eigenvectors by spatial rotations. This arbitrary element pervades all systems with 
degenerate energy spectra. 

\section{Conclusions}

The present investigation is based on Eq. (\ref{basic eq adj}), which represents an eigenvalue equation for $n \times n$ matrices with rank $n - 1$. The material of Sect. 4 %A significant amount of the discussion 
is devoted to generalizing Eq.~(\ref{basic eq adj}) to the case of matrices with higher degeneracy (Theorem 4). Such a demand occurs while working with the Dirac equation and in numerous other applications. We have limited ourselves to examining Hermitian matrices to simplify the whole procedure, as well as because Hermitian matrices are almost exclusively used in quantum physics: the Hamiltonian that determines the evolution of quantum systems must be Hermitian in order to preserve probability. 

The novel formalism maintains the structure and symmetries of the eigenvalue equations at intermediate stages of calculations, enabling in-depth analysis of solution properties and expanding the scope of analytical techniques.

\section*{Acknowledgements}
This work was supported by the Russian Science Foundation, Project No. 23-22-00307.

%%%%%%%%%%%%%%%%%%%%%%%%%%%%%%%%%%%%%%%%%%%%%%%%%%%%%%%%%%%%%%%%%%%%%%%
%%%%%%%%%%%%%%%%%%%%%%%%%%%%%%%%%%%%%%%%%%%%%%%%%%%%%%%%%%%%%%%%%%%%%%%
%%%%%%%%%%%%%%%%%%%%%%%%%%%%%%%%%%%%%%%%%%%%%%%%%%%%%%%%%%%%%%%%%%%%%%%
\section*{Appendix A. The generalized Kronecker symbol}
%%%%%%%%%%%%%%%%%%%%%%%%%%%%%%%%%%%%%%%%%%%%%%%%%%%%%%%%%%%%%%%%%%%%%%%
%%%%%%%%%%%%%%%%%%%%%%%%%%%%%%%%%%%%%%%%%%%%%%%%%%%%%%%%%%%%%%%%%%%%%%%
%%%%%%%%%%%%%%%%%%%%%%%%%%%%%%%%%%%%%%%%%%%%%%%%%%%%%%%%%%%%%%%%%%%%%%%
\renewcommand{\theequation}{A.\arabic{equation}}
\setcounter{equation}{0}

In this appendix, we list key properties of the generalized Kronecker
symbol, used to derive formulas in the main sections.

\begin{definition}
The generalized Kronecker
symbol of order $s$ in an $n$-dimensional Euclidean space is expressed in terms of the
antisymmetric Levi-Civita symbol of order $n$:
\begin{equation}
\delta_{i_{1}\ldots i_{s}}^{j_{1}\ldots j_{s}}=\frac{1}{(n-s)!}\epsilon
^{j_{1}\ldots j_{s}k_{s+1}\ldots k_{n}}
\epsilon_{i_{1}\ldots i_{s}k_{s+1}\ldots k_{n}}.  \label{APP 1}
\end{equation}
%The summation is performed over $(n-s)!\mathcal{C}_{n-s}^{n}=n!/s!$ ordered index sets $(k_{s+1},\ldots,k_{n}).$ 
The indices $k_{s+1},$ \ldots, $k_{n}$, for which summation is carried out, can each be
assigned independently any one of the values $1, \ldots, n$. 
\end{definition}
Using $s=1$,
one gets to the standard definition of the ordinary Kronecker symbol $%
\delta_{i}^{j}$. The numerical values of the generalized Kronecker symbol
are independent of the specific coordinate system in use due to the invariance
of the Levi-Civita symbol.

The generalized Kronecker symbol can be used to define antisymmetrized tensors:
\begin{definition}
Let $T_{i_{1}\ldots i_{s}}$ be an order-$s$ tensor. Its antisymmetrized
partner is defined by
\begin{equation}
T_{[i_{1}\ldots i_{s}]}=\delta _{i_{1}\ldots i_{s}}^{j_{1}\ldots
j_{s}}T_{j_{1}\ldots j_{s}}.  \label{APP 0}
\end{equation}%
The summation is performed over the indices $j_{1}$, $\ldots$,
$j_{s}$ independently or, which is equivalent, over the ordered index sets $(j_{1},\ldots
,j_{s}) \subseteq \Sigma_{n}.$ 
\end{definition}
Antisymmetrizing a completely antisymmetric tensor of order $s$ results in its
multiplication by $s!$.

\begin{remark}
The order of the generalized Kronecker symbol is reduced by contracting its 
covariant and contravariant indices:
\begin{equation}
\delta_{i_{1}...i_{s}j_{s+1}\ldots j_{p}}^{j_{1}...j_{s}j_{s+1}\ldots j_{p}}=%
\frac{(n-s)!}{(n-p)!}\delta_{i_{1}...i_{s}}^{j_{1}...j_{s}}.  \label{APP 3}
\end{equation}
\end{remark}

%\begin{proof}
The summation involves $(p-s)!\mathcal{C}_{p-s}^{n-s}=(n-s)!/(n - p)!$
unordered samples $j_{s+1}$, \ldots, $j_{p}$ selected from $n-s$ numbers of
the set $\Sigma_{n}\backslash(j_{1},\ldots,j_{s})$. $\mathcal{C}_{k}^{n}$ is the binomial coefficient. 
%(that should not be misidentified with an element of the characteristic matrix). 
The signs of the terms
are unaffected by simultaneous permutations of the
contravariant and covariant indices, hence the unique factor $(n-s)!/(n - p)!$ appears.

Equation (\ref{APP 3}) also follows directly from
definition (\ref{APP 1}).
%\end{proof}

\begin{remark}
Contraction of $s$ covariant and $s$ contravariant indices in the product of two
generalized Kronecker symbols of order $s$ and order $p$ 
yields a generalized Kronecker symbol of order $p - s$:
\begin{equation}
\delta_{i_{1}\ldots i_{s}}^{j_{1}\ldots j_{s}}\delta_{j_{1}\ldots
j_{s}i_{s+1}\ldots i_{p}}^{i_{1}\ldots i_{s}j_{s+1}\ldots j_{p}}=s!\frac {%
(n-p+s)!}{(n-p)!}\delta_{i_{s+1}\ldots i_{p}}^{j_{s+1}\ldots j_{p}}.
\label{APP 4}
\end{equation}
\end{remark}

%\begin{proof}
    Let us comment on this equation. The unordered index sets $(j_{1},\ldots, j_{s})$ and $(i_{1},\ldots,i_{s})$ 
are the same. This match is provided by the first generalized Kronecker symbol. 
The second generalized Kronecker symbol ensures that the unordered index sets $(i_{1},\ldots,i_{s}j_{s+1}, $ $\ldots, j_{p})$ 
and $(j_{1},\ldots, j_{s}i_{s+1}, \ldots, i_{p})$ coincide. Since the index sets in the first generalized Kronecker symbol are an integral part of the index sets
in the second, the unordered index sets 
$(j_{s+1}, \ldots, j_{p})$ 
and $(i_{s+1}, \ldots, i_{p})$ are also the same. This fact explains 
the presence of the generalized Kronecker symbol on the right side of the equation.
As noted earlier, independent summation over indices is equivalent to summation over their ordered sets. 
For fixed $(j_{s+1}, \ldots, j_{p})$, 
the ordered sets $(j_{1},\ldots,j_{s})$ can be formed in 
$s!\mathcal{C}_{n-p}^{n-p+s}=(n-p+s)!/(n-p)!$ various ways. 
Summation by $i_{s+1}$, $\ldots$, $i_{p}$ is restricted by the first generalized Kronecker symbol and gives for the right side of the equation an additional coefficient of $s!$.
%\end{proof}

\begin{remark}
Contraction of $s$ covariant indices of the generalized Kronecker symbol of order $s$ 
with $s$ contravariant indices of the generalized Kronecker symbol of order 
$p$ yields a generalized Kronecker symbol of order $p$:
\begin{equation}
\delta_{k_{1} \ldots k_{s}}^{j_{1}\ldots j_{s}}
\delta_{i_{1}\ldots i_{s}i_{s+1} \ldots i_{p}}^{k_{1} \ldots k_{s}j_{s+1} \ldots j_{p}}
=s!\delta_{i_{1}\ldots i_{s}i_{s+1}\ldots i_{p}}^{j_{1}\ldots j_{s}j_{s+1}\ldots
j_{p}}.  \nonumber %\label{APP 5}
\end{equation}
\end{remark}

%\begin{proof} 
The unordered sets of the covariant and contravariant indices of the first
and second generalized Kronecker symbols on the left side of the equation coincide. 
So the sets of the covariant and contravariant indices
on the right side coincide as well. This
observation allows to fix the tensor structure of the equation. To determine the
coefficient, it is sufficient to contract the indices
$j_{1}$, $\ldots$, $j_{s}$ and $i_{1}$, $\ldots$, $i_{s}$, respectively, and use Eq. (\ref{APP 4}).
%\end{proof}

\begin{theorem}\textbf{The generalized Cauchy-Binet formula}

Let $\mathbf{A}$ be an $p\times m$ matrix,
and $\mathbf{B}$ be an $m\times q$ matrix, and $p\leq m$, $q\leq m$, then%
\begin{equation} 
( \det  \mathbf{AB} )_{I}^{J}=\sum_{K\subseteq (1,\ldots ,m)}
( \det \mathbf{A} )_{K}^{J} (\det \mathbf{B} )_{I}^{K},
\label{APP6}
\end{equation}
where $J=(j_{1},\ldots ,j_{s})\subseteq (1,\ldots ,p)$, $I=(i_{1},\ldots
,i_{s})\subseteq (1,\ldots ,q)$ and the summation runs over an unordered index
sets $K=(k_{1},\ldots ,k_{s})\subseteq (1,\ldots ,m)$, which specifies columns
and rows of matrices $\mathbf{A}$ and $\mathbf{B}$, respectively.
\end{theorem}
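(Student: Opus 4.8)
The plan is to reduce the claim to the defining antisymmetrization of the minor determinant in Eq.~(\ref{def 3}) and let the combinatorics of the generalized Kronecker symbol produce the stated sum over subsets. First I would expand the left-hand side directly from the definition as $(\det \mathbf{AB})_{I}^{J} = (AB)_{[i_{1}}^{j_{1}}\cdots(AB)_{i_{s}]}^{j_{s}}$, where the brackets denote the non-normalized antisymmetrization in the covariant indices introduced in Eq.~(\ref{APP 0}). Inserting $(AB)_{i_{a}}^{j_{a}} = A_{k_{a}}^{j_{a}}B_{i_{a}}^{k_{a}}$ and summing each $k_{a}$ independently over $(1,\ldots,m)$, I would observe that the factors $A_{k_{a}}^{j_{a}}$ carry no covariant index and therefore pass outside the antisymmetrization bracket, which acts only on the $i$-slots through $\delta_{i_{1}\ldots i_{s}}^{l_{1}\ldots l_{s}}$. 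What is left inside the bracket is exactly $B_{[i_{1}}^{k_{1}}\cdots B_{i_{s}]}^{k_{s}} = (\det \mathbf{B})_{I}^{k_{1}\ldots k_{s}}$, giving $(\det \mathbf{AB})_{I}^{J} = A_{k_{1}}^{j_{1}}\cdots A_{k_{s}}^{j_{s}}\,(\det \mathbf{B})_{I}^{k_{1}\ldots k_{s}}$.

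Second, I would use that $(\det \mathbf{B})_{I}^{k_{1}\ldots k_{s}}$ is totally antisymmetric in $(k_{1},\ldots,k_{s})$. Contracting any tensor $P^{k_{1}\ldots k_{s}}$ against a totally antisymmetric one equals the contraction of its antisymmetrized part divided by $s!$, a statement that follows at once from the remark below Eq.~(\ref{APP 0}) that antisymmetrizing an already antisymmetric order-$s$ tensor multiplies it by $s!$. Taking $P^{k_{1}\ldots k_{s}} = A_{k_{1}}^{j_{1}}\cdots A_{k_{s}}^{j_{s}}$, its antisymmetrized part in the $k$'s is precisely $(\det \mathbf{A})_{k_{1}\ldots k_{s}}^{J}$ by Eq.~(\ref{def 3}), so I arrive at $(\det \mathbf{AB})_{I}^{J} = \frac{1}{s!}(\det \mathbf{A})_{k_{1}\ldots k_{s}}^{J}(\det \mathbf{B})_{I}^{k_{1}\ldots k_{s}}$ with the $k$'s still summed independently.

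Third, I would pass from the independent summation to the advertised sum over unordered subsets. The summand $(\det \mathbf{A})_{K}^{J}(\det \mathbf{B})_{I}^{K}$ is invariant under permutations of $K=(k_{1},\ldots,k_{s})$, since each minor acquires the sign of the permutation and the two signs cancel, and it vanishes whenever two of the $k$'s coincide. Hence the independent sum over distinct $(k_{1},\ldots,k_{s})$ equals $s!$ times the sum over the unordered $s$-element subsets $K\subseteq(1,\ldots,m)$, and this $s!$ cancels the prefactor $1/s!$ from the previous step, reproducing Eq.~(\ref{APP6}). The case $s>m$ needs no separate treatment: the right-hand side is an empty sum while the left-hand side vanishes, because an $s\times s$ minor of $\mathbf{AB}$ cannot be built when there are fewer than $s$ intermediate indices.

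The single point demanding care is the bookkeeping of combinatorial factors under the non-normalized antisymmetrization convention of Eq.~(\ref{APP 0}): one must verify that the $1/s!$ generated when the antisymmetry of the $\mathbf{B}$-minor is transferred onto the $\mathbf{A}$-product is exactly annulled by the $s!$ that appears on collapsing the independent index sum to a sum over unordered subsets. An alternative, fully explicit route writes every antisymmetrization as a generalized Kronecker symbol from the outset and reduces the resulting products of symbols via Eqs.~(\ref{APP 3}) and (\ref{APP 4}); I expect the identical cancellation of factors to surface there, so either path completes the proof.
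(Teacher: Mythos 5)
Your argument is correct and follows essentially the same route as the paper's proof: both expand the minor of $\mathbf{AB}$ from the definition, transfer antisymmetry onto the intermediate indices to reach the identical intermediate expression $\frac{1}{s!}(\det\mathbf{A})^{J}_{k_{1}\ldots k_{s}}(\det\mathbf{B})_{I}^{k_{1}\ldots k_{s}}$ with independent summation, and then collapse that sum to unordered subsets, cancelling the $1/s!$ against the resulting $s!$. Your bookkeeping (pulling the $\mathbf{A}$ factors through the $i$-antisymmetrization first, plus the remark on the empty case $s>m$) is only a cosmetic reordering of the paper's steps.
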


\begin{proof}
The minor determinant of $\mathbf{AB}$ can be transformed as follows:%
\begin{align*}
( \det \mathbf{AB} )_{I}^{J}& =\frac{1}{s!}\delta _{l_{1}\ldots
l_{s}}^{j_{1}\ldots j_{r}}\delta _{i_{1}\ldots i_{s}}^{m_{1}\ldots
m_{r}}\left( \mathbf{AB}\right) _{m_{1}}^{l_{1}}\ldots \left( \mathbf{AB}\right)
_{m_{s}}^{l_{s}} \\
& =\frac{1}{(s!)^{3}}\delta _{l_{1}\ldots l_{s}}^{j_{1}\ldots j_{r}}\delta
_{i_{1}\ldots i_{s}}^{m_{1}\ldots m_{r}}A_{[k_{1}}^{l_{1}}\ldots
A_{k_{s}]}^{l_{s}}B_{[m_{1}}^{k_{1}}\ldots B_{m_{s}]}^{k_{s}} \\
& =\frac{1}{s!}A_{[k_{1}}^{j_{1}}\ldots
A_{k_{s}]}^{j_{s}}B_{[i_{1}}^{k_{1}}\ldots B_{i_{s}]}^{k_{s}}.
\end{align*}
Independent summation over the repeated indices yields $s!$ identical entries in each term of the associated unordered index set $(k_{1},\ldots,k_{s})$. The sum can be conducted over the unordered index sets $K=(k_{1},\ldots,k_{s})$ with a weight of $s!$. Applying Eq. (\ref{def 3}) yields Eq. (\ref{APP6}). The number of terms included in Eq.~(\ref{APP6}) equals $\mathcal{C}_{s}^{m}$. 
\end{proof}

\begin{corollary} \textbf{The Cauchy-Binet formula}
    
    For $p=q=s=n \leq m$ and $J = I = \Sigma_{n}$, one gets 
    %the Cauchy-Binet formula:  
\begin{equation}
\det \mathbf{AB} = \sum_{K\subseteq (1,\ldots ,m)}
(\det \mathbf{A} )_{K}^{J} (\det \mathbf{B})_{I}^{K}.
\label{APP7}
\end{equation}
\end{corollary}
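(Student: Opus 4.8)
The plan is to obtain the classical Cauchy--Binet formula directly as the maximal-order specialization of the generalized version already established in Theorem 5. Concretely, I would set $p=q=s=n\le m$ and take the full ordered index sets $J=I=\Sigma_{n}=(1,\ldots,n)$ in Eq.~(\ref{APP6}). Under these choices $\mathbf{A}$ is $n\times m$, $\mathbf{B}$ is $m\times n$, and $\mathbf{AB}$ is a genuine $n\times n$ matrix, so the hypotheses $p\le m$, $q\le m$ of Theorem 5 hold and its formula applies verbatim.

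The single verification needed is that the order-$n$ minor on the left collapses to the ordinary determinant. Using Eq.~(\ref{det}) for $s=n$ together with the Levi-Civita definition (\ref{APP 1}),
\begin{equation*}
(\det \mathbf{AB})_{\Sigma_{n}}^{\Sigma_{n}}=\delta_{1\ldots n}^{1\ldots n}\det \mathbf{AB}=\det \mathbf{AB},
\end{equation*}
since the generalized Kronecker symbol for matching full ordered sets equals unity. On the right, each factor $(\det \mathbf{A})_{K}^{\Sigma_{n}}$ and $(\det \mathbf{B})_{\Sigma_{n}}^{K}$ is a maximal $n\times n$ minor of the corresponding rectangular matrix, built from the columns of $\mathbf{A}$ and the rows of $\mathbf{B}$ indexed by the $n$-element set $K$. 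The summation over unordered $K\subseteq(1,\ldots,m)$ of size $n$ then reproduces the classical sum over all $\mathcal{C}_{n}^{m}$ selections of columns and rows, and the stated identity follows.

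Since the substantive combinatorics---the passage from independent to unordered summation and the attendant factorials---were already carried out in the proof of Theorem 5, I do not expect a genuine obstacle here: the entire content of the corollary is the remark that the top-dimensional minor is the determinant. The only point deserving explicit care is the index bookkeeping $\delta_{1\ldots n}^{1\ldots n}=1$, which I would record as above so that the reduction is airtight.
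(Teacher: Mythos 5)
Your proposal is correct and follows exactly the route the paper intends: the corollary is stated without a separate proof precisely because it is the specialization $p=q=s=n$, $J=I=\Sigma_{n}$ of Theorem 5, with the top-order minor collapsing to the determinant via Eq.~(\ref{det}) and $\delta_{1\ldots n}^{1\ldots n}=1$. Your explicit check of that last reduction is the only nontrivial step, and it is handled correctly.
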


%%%%%%%%%%%%%%%%%%%%%%%%%%%%%%%%%%%%%%%%%%%%%%%%%%%%%%%%%%%%%%%%%%%%%%%
%%%%%%%%%%%%%%%%%%%%%%%%%%%%%%%%%%%%%%%%%%%%%%%%%%%%%%%%%%%%%%%%%%%%%%%
%%%%%%%%%%%%%%%%%%%%%%%%%%%%%%%%%%%%%%%%%%%%%%%%%%%%%%%%%%%%%%%%%%%%%%%
\section*{Appendix B. Diagonal sum of minor determinants}
%%%%%%%%%%%%%%%%%%%%%%%%%%%%%%%%%%%%%%%%%%%%%%%%%%%%%%%%%%%%%%%%%%%%%%%
%%%%%%%%%%%%%%%%%%%%%%%%%%%%%%%%%%%%%%%%%%%%%%%%%%%%%%%%%%%%%%%%%%%%%%%
%%%%%%%%%%%%%%%%%%%%%%%%%%%%%%%%%%%%%%%%%%%%%%%%%%%%%%%%%%%%%%%%%%%%%%%
\renewcommand{\theequation}{B.\arabic{equation}}
\setcounter{equation}{0}

We express the determinant of matrix $\mathbf{A}$ as 
\begin{equation}
\det \mathbf{A}  =\frac{1}{n!}\delta_{i_{1}\ldots
i_{n}}^{j_{1}\ldots j_{n}}A_{j_{1}}^{i_{1}}\ldots A_{j_{n}}^{i_{n}},
\label{lhside det}
\end{equation}
and then substitute $\mathbf{A}\rightarrow \mathbf{D}=\mathbf{I}+z\mathbf{A}$ as in Sect. 4.4.

The right side of the equation, being integrated over $z$ counterclockwise around $z=0$ and with a weight of $1/z^{n - s + 1}$, gives a diagonal sum (trace) of minor determinants of order $n - s$: 
\begin{align*}
\frac{1}{2\pi i}\oint \frac{dz}{z^{n-s+1}}\frac{1}{n!}\delta_{i_{1}\ldots
i_{n}}^{j_{1}\ldots j_{n}}D_{j_{1}}^{i_{1}}\ldots D_{j_{n}}^{i_{n}} & =
\frac{1}{(n - s)!}\delta _{i_{s+1}\ldots i_{n}}^{j_{s+1}\ldots
j_{n}}A_{j_{s+1}}^{i_{s+1}}\ldots A_{j_{n}}^{i_{n}} \\
%& =\left( \mathcal{C}_{s}^{n}\right) ^{-1}\frac{1}{(n-s)!}\delta_{i_{s+1}\ldots
%i_{n}}^{j_{s+1}\ldots j_{n}}A_{j_{s+1}}^{i_{s+1}}\ldots A_{j_{n}}^{i_{n}} \\
& = \sum_{J} (\det \mathbf{A})_{J}^{J},
\end{align*}
where $J$ runs over the unordered index sets. 
The power series expansion of $\det (\mathbf{I}+z\mathbf{A}%
) =\exp\left( \mathrm{Tr}\ln\left( \mathbf{I}+z\mathbf{A}\right)
\right) $ around $z=0$ has the form of Eq. (\ref{expo1}). 

By substituting the left side of Eq. (\ref{lhside det}) in the form of a power series to the contour integral and performing the integration, we obtain
\begin{theorem}
The diagonal sum of minor determinants of order $n - s$ of an $n\times n$
matrix $\mathbf{A}$ is expressed in terms of traces of its powers: 
\begin{equation}
\frac{1}{(n-s)!}\delta_{i_{s+1}\ldots i_{n}}^{j_{s+1}\ldots
j_{n}}A_{j_{s+1}}^{i_{s+1}}\ldots
A_{j_{n}}^{i_{n}} =  \sum_{k_{1},\ldots,k_{n-s}}\prod%
\limits_{l=1}^{n-s}\frac{\left( -1\right) ^{k_{l}+1}}{l^{k_{l}}k_{l}!}%
\mathrm{Tr}[\mathbf{A}^{l}]^{k_{l}},  \label{main det eq}
\end{equation}
with parameters $k_{l}$ satisfying the constraint 
\begin{equation}
\sum_{l=1}^{n-s}lk_{l}=n-s.  \label{main det eq constraint}
\end{equation}
\end{theorem}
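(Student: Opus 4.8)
The plan is to re-run the generating-function argument already used to prove Theorem~5 (Eq.~(\ref{main eq})), but starting from the determinant identity (\ref{lhside det}) in place of the adjugate identity (\ref{lhside}). After the substitution $\mathbf{A}\to\mathbf{D}=\mathbf{I}+z\mathbf{A}$, both sides of (\ref{lhside det}) become functions of $z$, and I would extract the coefficient of $z^{n-s}$ from each by the contour integral $\frac{1}{2\pi i}\oint dz\,z^{-(n-s+1)}$. The theorem is then just the statement that this single Taylor coefficient, computed on the two sides, must agree.

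For the tensor (left-hand) side I would write $D_{j}^{i}=\delta_{j}^{i}+zA_{j}^{i}$ and collect the degree-$(n-s)$ monomials of the product $D_{j_{1}}^{i_{1}}\cdots D_{j_{n}}^{i_{n}}$: these are the terms in which $s$ factors contribute a Kronecker delta and $n-s$ factors contribute $zA$. Each delta contracts a diagonal index pair of the generalized Kronecker symbol and lowers its order via Eq.~(\ref{APP 3}); the $\binom{n}{s}$ equivalent placements of the deltas together with the factor $s!$ produced by the $s$ contractions combine with the prefactor $1/n!$ to leave $\frac{1}{(n-s)!}\delta_{i_{s+1}\ldots i_{n}}^{j_{s+1}\ldots j_{n}}A_{j_{s+1}}^{i_{s+1}}\cdots A_{j_{n}}^{i_{n}}$, i.e.\ the diagonal sum $\sum_{J}(\det\mathbf{A})_{J}^{J}$ over unordered $J$ of size $n-s$. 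This is exactly the left side of (\ref{main det eq}) and is the computation displayed immediately before the statement.

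For the scalar (right-hand) side I would use $\det(\mathbf{I}+z\mathbf{A})=\exp(\mathrm{Tr}\ln(\mathbf{I}+z\mathbf{A}))$, insert the expansion (\ref{expo1}), and again pick out $z^{n-s}$. The pole of order $n-s+1$ selects precisely the monomials with $\sum_{l}lk_{l}=n-s$, which is the constraint (\ref{main det eq constraint}) and simultaneously forces $k_{l}=0$ for $l>n-s$, truncating the infinite product. The one point requiring care---and the main obstacle---is reconciling the sign conventions: (\ref{expo1}) carries the per-factor sign $(-1)^{(l+1)k_{l}}$ whereas (\ref{main det eq}) is written with $(-1)^{k_{l}+1}$. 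I would settle this by comparing total signs under the constraint: the former yields $(-1)^{\sum_{l}lk_{l}+\sum_{l}k_{l}}=(-1)^{(n-s)+\sum_{l}k_{l}}$, while the latter, with the product running over all $l=1,\ldots,n-s$ so that each $k_{l}=0$ factor contributes $(-1)^{0+1}=-1$, yields $(-1)^{(n-s)+\sum_{l}k_{l}}$ as well. The two expressions therefore agree term by term, and everything else is a direct transcription of the Theorem~5 argument with $n-1$ replaced throughout by $n-s$.
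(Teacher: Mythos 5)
Your proposal is correct and follows essentially the same route as the paper: substitute $\mathbf{D}=\mathbf{I}+z\mathbf{A}$ into the determinant identity (\ref{lhside det}), extract the coefficient of $z^{n-s}$ by the contour integral, identify the tensor side with $\sum_{J}(\det\mathbf{A})_{J}^{J}$, and match it against the expansion (\ref{expo1}) of $\exp(\mathrm{Tr}\ln(\mathbf{I}+z\mathbf{A}))$ under the constraint $\sum_{l}lk_{l}=n-s$. Your sign reconciliation between $(-1)^{(l+1)k_{l}}$ and $(-1)^{k_{l}+1}$ is a correct and worthwhile check that the paper leaves implicit.
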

The summation over $l$ runs for $l\leq n-s$, so that $k_{l}=0$ for $l>n-s$.
The sum includes all non-negative solutions, $k_{1}$, \ldots, $k_{n-s}$
of Eq. (\ref{main det eq constraint}). 

\begin{corollary}
Equation (\ref{main det eq}) can be written in terms of the complete exponential Bell polynomial: 
\begin{equation} \label{main det eq plus}
\sum_{J} (\det \mathbf{A}) _{J}^{J}  =  \frac {(-1)^{n-s}}{(n-s)!}B_{n-s}(x_{1},\ldots,x_{n-s}), 
\end{equation}
where  $x_{l}=-(l-1)!\mathrm{Tr}[\mathbf{A}^{l}]$ and 
 $J$ runs over the unordered index sets $(j_{s+1},\ldots,j_{n}) \subseteq \Sigma_{n}$.
\end{corollary}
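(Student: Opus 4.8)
The plan is to deduce this corollary directly from Theorem 6, whose right side already expresses the diagonal sum of minors as a multinomial in the traces $\mathrm{Tr}[\mathbf{A}^{l}]$, and then to recognize that multinomial as a scaled complete exponential Bell polynomial. First I would record the explicit product form of the Bell polynomial supplied above, namely
\begin{equation}
\frac{1}{(n-s)!}B_{n-s}(x_{1},\ldots,x_{n-s}) = \prod_{l=1}^{n-s}\left(\sum_{k_{l}=0}^{n-s}\frac{x_{l}^{k_{l}}}{k_{l}!\,l!^{k_{l}}}\right), \nonumber
\end{equation}
understood subject to the constraint $\sum_{l=1}^{n-s} l k_{l} = n-s$, and substitute $x_{l} = -(l-1)!\,\mathrm{Tr}[\mathbf{A}^{l}]$.

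The core computation is a factor-by-factor simplification. Using $l! = l\,(l-1)!$ one has $l!^{k_{l}} = l^{k_{l}}\,((l-1)!)^{k_{l}}$, so the powers of $(l-1)!$ cancel against those produced by $x_{l}^{k_{l}} = (-1)^{k_{l}}((l-1)!)^{k_{l}}\,\mathrm{Tr}[\mathbf{A}^{l}]^{k_{l}}$, leaving
\begin{equation}
\frac{x_{l}^{k_{l}}}{k_{l}!\,l!^{k_{l}}} = \frac{(-1)^{k_{l}}}{l^{k_{l}} k_{l}!}\,\mathrm{Tr}[\mathbf{A}^{l}]^{k_{l}}. \nonumber
\end{equation}
Expanding the product over $l$ then presents $\tfrac{1}{(n-s)!}B_{n-s}$ as a sum over tuples $(k_{1},\ldots,k_{n-s})$ obeying the constraint, with summand $\prod_{l=1}^{n-s}(-1)^{k_{l}}\,\mathrm{Tr}[\mathbf{A}^{l}]^{k_{l}}/(l^{k_{l}}k_{l}!)$. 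This differs from the summand in Theorem 6 only in the sign, $(-1)^{k_{l}}$ in place of $(-1)^{k_{l}+1}$. The same device was used in the Corollary to Theorem 5, so I expect it to transfer cleanly here.

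The one point requiring care — and the only real obstacle — is the accounting of this sign. Since the product runs over all $l$ from $1$ to $n-s$, \emph{including} the indices with $k_{l}=0$, each factor carries exactly one extra minus sign in Theorem 6, so that $\prod_{l=1}^{n-s}(-1)^{k_{l}+1} = (-1)^{n-s}\prod_{l=1}^{n-s}(-1)^{k_{l}}$ uniformly in the tuple, because the exponent $\sum_{l}(k_{l}+1)$ exceeds $\sum_{l}k_{l}$ by precisely $n-s$. Pulling this global factor $(-1)^{n-s}$ out of the sum identifies the right side of Theorem 6 with $(-1)^{n-s}B_{n-s}/(n-s)!$; combined with the fact, established in the derivation of Theorem 6, that its left side equals $\sum_{J}(\det\mathbf{A})_{J}^{J}$, this yields Eq. (\ref{main det eq plus}). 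As a sanity check I would verify the case $n-s=2$, where $B_{2}(x_{1},x_{2}) = x_{1}^{2}+x_{2}$ reproduces $\tfrac12(\mathrm{Tr}[\mathbf{A}]^{2}-\mathrm{Tr}[\mathbf{A}^{2}])$, confirming that the sign convention and the $k_{l}=0$ factors are handled correctly.
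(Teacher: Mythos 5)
Your proposal is correct and follows essentially the same route the paper intends: substitute $x_{l}=-(l-1)!\,\mathrm{Tr}[\mathbf{A}^{l}]$ into the constrained product form of $B_{n-s}/(n-s)!$ given after Theorem 4, cancel the factorials via $l!^{k_{l}}=l^{k_{l}}((l-1)!)^{k_{l}}$, and match against Theorem 6. The sign bookkeeping $\prod_{l=1}^{n-s}(-1)^{k_{l}+1}=(-1)^{n-s}\prod_{l=1}^{n-s}(-1)^{k_{l}}$, which works only because the product includes the $k_{l}=0$ factors, is exactly the one subtle point, and you handle it correctly.
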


The determinant form of the complete exponential Bell polynomials enables presenting the determinant of a matrix $\mathbf{A}$ as the determinant of a matrix $\mathbf{B}$, whose elements are the traces of the powers of $\mathbf{A}$. This property is exploited 
to solve the Faddeev--Le Verrier recursion \cite{Faddeyev:1949,Frame:1949,LeVerrier:1840} via the determinant of $\mathbf{B}$ \cite{Reed:1978, Brown:1994}. The determinant form is also accessible for Eqs. (\ref{main eq}), (\ref{main eq plus}), (\ref{main det eq}), and (\ref{main det eq plus}).

\begin{example}
Particular cases of Eqs. (\ref{main det eq}) and (\ref{main det eq plus})
\end{example}

 Using the Maple symbolic computation
software package \cite{Maple} we have verified Eqs. (\ref{main det eq}) and (\ref{main det eq plus}) for $n = 3$ with $s=1,2$ and $n = 4$ with $s=1,2,3$. 

The right side of Eqs. (\ref{main det eq}) and (\ref{main det eq plus}) converts to a determinant after replacing $n - s \to n$. For other simple cases of $n - s = 1, \ldots, 4$, we
obtain:
\begin{center}
$%
\begin{array}{cl}
\hline
\hline
\vspace{0.5mm}
n-s & ~~~~~~~~~~~~~\sum_{J} (\det \mathbf{A}) _{J}^{J} \\ 
\hline
%0 & ~~~ \, \det \mathbf{A}\\ 
1 & ~~~~\mathrm{Tr}[\mathbf{A}]\\ 
2 & \frac{1}{2!} \left( \mathrm{Tr}[\mathbf{A}]^{2}-\mathrm{Tr}[\mathbf{A}^{2}] \right) \\ 
3 & \frac{1}{3!} \left( \mathrm{Tr}[\mathbf{A}]^{3}-3\mathrm{Tr}[%
\mathbf{A}]\mathrm{Tr}[\mathbf{A}^{2}]+2\mathrm{Tr}[\mathbf{A}^{3}]\right) \\ 
4 & \frac{1}{4!} \left( \mathrm{Tr}[\mathbf{A}]^{4}-6\mathrm{Tr}[%
\mathbf{A}^{2}]\mathrm{Tr}[\mathbf{A}]^{2}+3\mathrm{Tr}[\mathbf{A}^{2}]^{2} \right. \nonumber \\
&~~~~~~~~~~~~~\left. +~8 \mathrm{Tr}[\mathbf{A}^{3}]\mathrm{Tr}[\mathbf{A}] 
- 6\mathrm{Tr}[\mathbf{A}^{4}]\right)\\
\vdots  & ~~~~~~~~ \vdots \\
n & ~~~~~ \det \mathbf{A}\\
\hline
\hline
\end{array}
$
\end{center}

\end{document}